\documentclass[11pt,a4paper]{article}
\usepackage[margin=1in]{geometry}
\usepackage[T1]{fontenc}
\usepackage{lmodern,microtype}
\usepackage{amsmath,amssymb,amsthm,mathtools}
\usepackage{enumitem,booktabs,needspace,array}
\usepackage[numbers,sort&compress]{natbib}
\usepackage[hidelinks]{hyperref}
\usepackage{xcolor}
\usepackage{tikz,pgfplots}
\usetikzlibrary{arrows.meta,positioning}
\pgfplotsset{compat=1.18}
\newcommand{\Sch}{\mathcal S}
\newcommand{\PV}{\operatorname{PV}}
\newtheorem{theorem}{Theorem}[section]
\newtheorem{proposition}[theorem]{Proposition}
\newtheorem{conjecture}[theorem]{Conjecture}
\numberwithin{equation}{section}
\setlist{itemsep=3pt,topsep=5pt}
\hypersetup{pdftitle={Could four-dimensional phi4 theory exist beyond the usual triviality arguments?},pdfauthor={Anton Ragin}}
\title{Could four-dimensional $\phi^4$ theory exist\\beyond the usual triviality arguments?}
\author{Anton Ragin\\[0.3em]
\small Abdus Salam Centre for Theoretical Physics\\
\small Imperial College London, London SW7 2AZ, UK\\[0.3em]
\small\href{mailto:anton.ragin23@imperial.ac.uk}{\texttt{anton.ragin23@imperial.ac.uk}}}
\date{6 October 2026}
\begin{document}
\maketitle
\begin{abstract}
Do lattice triviality, suspected Borel non-summability and the
perturbative Landau pole exclude an interacting four-dimensional theory
satisfying the Osterwalder--Schrader axioms and reproducing every
renormalized coefficient of positive-coupling $O(N)$ $\phi^4$ theory?
An example involving one real function separates vanishing cutoff
limits along its allowed bare trajectories from a nonzero result obtained by renormalizing
the coefficients first; an exponentially small correction preserves its
expansion and removes the pole in its running coupling.
For the massless two-point expansion of the singlet composite
$\Phi_a\Phi_a$ at leading order in $1/N$,
ordinary Borel summation fails even for Gaussian smearing.
The arithmetic mean of two Borel integrals along opposite imaginary
rays defines a real distribution on every four-dimensional Schwartz
momentum or position test, with finite-order remainder bounds uniform
on bounded test sets.
Two alternative covariances with nonnegative mass spectral weights
remove the Euclidean pole and construct generalized free fields.
After a fixed contact subtraction, their Euclidean extensions preserve
every coefficient of the collective correlations obtained by Wick pairing.
The interacting finite-$N$ family remains open.
We also state a conditional interacting route through the physical
scalar correlations of a larger theory (the parent) with heavy particles. A concrete
$N=1$ gauge--Higgs--fermion candidate has a tuned one-loop trajectory
on which all interaction couplings tend to zero in the ultraviolet,
and a stable tree vacuum with additional masses comparable to the heavy
scale. Constructing the parent and its corrected ultraviolet trajectory,
and proving coefficient decoupling and the exact remainder bounds as
the quartic coupling tends to zero and the family's heavy masses grow,
remain open.
\end{abstract}
\clearpage
\setcounter{tocdepth}{1}
\begingroup\small\tableofcontents\endgroup
\clearpage

\section{The existence question}
\label{sec:question}

Does four-dimensional $\phi^4$ theory have to be free in the continuum?
The answer depends on what is required of a theory called
``$\phi^4$''. We ask for an interacting quantum field theory whose
correlation functions satisfy the Osterwalder--Schrader (OS) axioms---the
Euclidean consistency conditions for reconstructing a relativistic
quantum theory with positive state norms---and
have the complete renormalized perturbative expansion of the quartic
model. This definition specifies every perturbative coefficient, but
allows contributions that are smaller than every power of the coupling.
Throughout, \emph{interacting} means a non-Gaussian hierarchy of
elementary scalar correlations. Nontrivial scattering is a stronger
conclusion, requiring additional scattering assumptions and on-shell
control; it is not part of the criterion used here.
We conjecture that such a theory exists. The purpose of this article is
to explain why the usual triviality arguments do not, on their own,
exclude it, and to identify concrete construction mechanisms worth
investigating.

The end goal is a rigorous construction of those correlation functions.
Three familiar arguments are often offered against it: triviality
results give Gaussian continuum limits for specified classes of scalar
lattice models; positive-axis renormalon obstructions to Borel summation
have been proposed \cite{MagnenEtAl1987,Beneke1999}; and perturbative
running develops a Landau pole. Ordinary Borel summation for a positive
coupling forms
a transform by dividing the $k$th coefficient by $k!$, then integrates
its continuation along the positive real Borel axis against an
exponential kernel. Singularities on that axis or excessive growth
can obstruct the integral, but failure of this rule does not by itself
exclude exact correlations with the same asymptotic expansion.

Factorial bounds give a locally analytic Borel transform for massive
scalar $\phi^4_4$ at fixed external momenta
\cite{DeCalanRivasseau1981}; they do not establish
recovery along the entire positive ray. For comparison, high-order
minimal-subtraction RG data favour instanton-dominated coefficient growth
\cite{DunneMeynig2022}. Section~\ref{sec:collective} distinguishes
these results from the unresolved Borel summability of all finite-$N$
elementary correlation functions with the fixed physical normalization.
Each objection needs to be related to the definition above.

A lattice construction can first produce exact correlations and then
prove their perturbative expansion. Alternatively, one can try to
construct correlations from renormalized coefficients, with additional
non-perturbative information. Neither approach is built into our
definition.

If a theory with the required expansion is constructed, it need not be
the Borel sum of that expansion, and it cannot belong to a lattice class
whose limits are all Gaussian. The Landau-pole objection concerns a
divergence of the perturbative running coupling at a finite Euclidean
momentum scale. The corresponding exact response requires separate
existence and regularity checks.

\subsection{What must agree with quartic perturbation theory?}

Fix a finite number $N$ of real scalar components. The starting
prescription for the perturbative coefficients is the Euclidean bare
Lagrangian
\begin{equation}
 \mathcal L_{E,B}
 =\frac12(\partial\Phi_{B,a})^2
  +\frac12m_B^2\Phi_{B,a}^2
  +\frac{h_B}{8N}(\Phi_{B,a}\Phi_{B,a})^2.
 \label{eq:bare}
\end{equation}
Repeated component indices are summed. This expression supplies Feynman
rules and counterterms; it is not yet a definition of a continuum
measure or Hamiltonian.

Fix a renormalization prescription, including its convention for
contact terms at coincident insertions. Correlation functions are
distributions and can be singular when insertion points coincide.
Extending them to those points specifies local terms such as delta
functions supported at coincidence. Choose a positive particle pole mass
$m$ (the one-particle pole after continuation to timelike momentum),
a unit two-point pole residue,
and a fixed Euclidean momentum configuration for a quartic subtraction.
The configuration is nonexceptional: no nonempty proper subset of its
external momenta sums to zero. A chosen $O(N)$-invariant projection of
the connected amputated four-point function, with a fixed numerical
factor and sign, defines the dimensionless coupling $g$. We normalize
this projection so that $g=h_B/(32\pi^2)$ at tree level.\footnote{For
example, at a symmetric configuration one can use
$g=-\delta_{ab}\delta_{cd}G^{(4),\mathrm{amp}}_{abcd}/[32\pi^2(N+2)]$,
where $G^{(4),\mathrm{amp}}$ is the connected amputated response with
its momentum-conservation delta removed. Its tree value is
$-(h_B/N)(\delta_{ab}\delta_{cd}+\delta_{ac}\delta_{bd}+\delta_{ad}\delta_{bc})$.
The projection and configuration are fixed prescription data.} Keep
$m$, the residue condition and the subtraction momenta fixed as $g$
varies. The exact correlations we seek must satisfy these same
normalization conditions, not just their perturbative approximations.

The subtraction configuration has a fixed scale $\mu$. With the other
normalization data held fixed, $g$ is an input label for a theory,
which we denote by $\mathcal T_g$. Its point response at another scale
$Q$, when defined, is a different quantity:
\begin{equation}
 \bar g(Q;g):=\text{the same normalized four-point response in }
              \mathcal T_g\text{ at scale }Q,
 \qquad \bar g(\mu;g)=g.
 \label{eq:label-and-running}
\end{equation}
Varying $Q$ with $g$ fixed probes one theory; varying $g$ with $\mu$
fixed selects different theories. The expression ``running coupling''
refers to the former prediction, $\bar g(Q;g)$, or to using that
prediction as a new label at a different subtraction point wherever
that change of labels is locally invertible. In evaluating
$\bar g(Q;g)$, all external momenta are scaled by $Q/\mu$ from the
reference configuration, using the same projection and field
normalization. Tuning bare parameters as a cutoff is removed to keep
the physical normalization conditions fixed is a third operation,
discussed in Section~\ref{sec:cutoff}.

The exact normalization $\bar g(\mu;g)=g$ is required at the reference
point. Section~\ref{sec:poles} defines point responses at other scales
using shrinking smooth momentum packets and states sufficient
regularity conditions, including nonzero external propagator values.
The OS axioms and the fixed-test coefficient comparison below do not
by themselves establish these additional properties.

Renormalizing the Feynman expansion with the fixed physical conditions
gives coefficient distributions $a_{n,k}$ for each $n$-point function:
\begin{equation}
 \widehat S_n(g)=\sum_{k=0}^{\infty}g^k a_{n,k}.
 \label{eq:target-series}
\end{equation}
Here $n$ counts field insertions, $k$ counts powers of $g$, and component
indices are suppressed. The hat indicates a formal series: the
coefficients are specified without assuming convergence of the sum.
Renormalization removes the regulator from each coefficient separately
\cite{Zimmermann1969,EpsteinGlaser1973,Muller2003}.

Let $S_n(g)$ denote proposed exact Euclidean correlations. Agreement
with \eqref{eq:target-series} means that for every $n$, every integer
$K\geq1$, and every Schwartz test function $f$,
\begin{equation}
 \left\langle S_n(g)-\sum_{k=0}^{K-1}g^k a_{n,k},f\right\rangle
 =O_{n,K,f}(g^K),\qquad g\downarrow0.
 \label{eq:matching}
\end{equation}
The test function is smooth and rapidly decreasing in all insertion
coordinates; the brackets mean smearing the distribution with that
function. Thus agreement is required for every correlation function
and every perturbative order. The prescribed contact terms are
included in the coefficient comparison. Exact correlations may still
contain additional contact terms whose smeared contributions are
smaller than every power of $g$. This matters for the collective-field
example below.

In addition, the whole collection $\{S_n(g)\}$ must satisfy the OS
axioms, including reflection positivity, clustering and the growth
condition that permits reconstruction of tempered Minkowski fields
\cite[Section~IV]{OsterwalderSchrader1975}. The reconstruction test
domain excludes coincident insertions, whereas \eqref{eq:matching}
also fixes the perturbative coefficients of the contact terms.
Positivity is a requirement on correlations of different
field orders together; it cannot be
checked by accepting each $S_n$ independently.

\begin{conjecture}
\label{conj:existence}
For fixed finite $N$ and the massive physical normalization just
specified, an $O(N)$-covariant continuum family exists for all
sufficiently small $g>0$, satisfies the OS axioms and the exact
normalization conditions, and obeys \eqref{eq:matching}, including the
prescribed coefficient extensions at coincident points.
\end{conjecture}

Apart from the fixed light pole mass and residue, this criterion does
not prescribe the full mass spectrum at positive $g$. It permits
particle-production thresholds involving additional heavy states:
these are the timelike energies at which channels containing those
states open, typically giving a continuous spectrum or a branch cut.
Their masses may depend on $g$, provided their contributions preserve
\eqref{eq:matching} and the OS axioms. Section~\ref{sec:parent} explores
this possibility with heavy masses that grow faster than every inverse
power of $g$. Such masses are finite at each positive $g$. Choosing
these mass functions is additional non-perturbative input. Once that
choice and the normalization data are fixed, $g$ is the only variable
parameter of the scalar family $S_n(g)$.

Such a family is interacting: the quartic model has a nonzero
order-$g$ connected four-point coefficient at separated insertions.
Equation~\eqref{eq:matching} then makes that exact connected correlation
nonzero for sufficiently small $g>0$.

\subsection{Why non-perturbative terms enter the argument}

A contribution such as $e^{-1/g}$ is called \emph{flat} at $g=0$:
it is smaller than every power of $g$ as $g\downarrow0$. For a
distribution, we require this after each fixed Schwartz smearing.
Adding a flat term changes none of the coefficients in
\eqref{eq:matching}. That condition concerns each fixed test function;
it does not require uniform agreement at momentum scales that grow as
$g$ decreases. A correction can therefore be flat under fixed smearing
and still change the behavior near the perturbative Landau pole,
whose momentum scale tends to infinity as $g$ decreases.
We use \emph{completion} to mean
the choice of actual correlation functions with the prescribed
expansion, including any such non-perturbative terms. The axioms and
the physical normalization constrain this choice.

The first example is a real-valued function of momentum and coupling,
built from a regulated logarithmic series. The second uses the
massless leading-$1/N$ two-point expansion of a collective field formed
from the singlet composite $\Phi_a\Phi_a$, with its vacuum expectation
subtracted. Choosing the massless case isolates the logarithmic
summation and positivity problems. Bubble diagrams generate its logarithmic coefficients,
but the geometric sum's Euclidean pole still needs a distributional
prescription. At the leading order retained here, the collective
hierarchy is Gaussian: its higher coefficients follow from Wick
pairings of the two-point coefficients.
Completing the two-point function with nonnegative mass
spectral weights and forming higher correlations by Wick pairings
gives generalized free fields. Their connected functions above order
two vanish, although their spectra need not describe a single free
particle. With the stated contact convention they reproduce the
collective Wick coefficients. The conjecture requires the full
interacting finite-$N$ elementary hierarchy in \eqref{eq:target-series}.

Showing that these arguments leave the axiomatic existence question
open is not, by itself, evidence that the interacting family exists.
Section~\ref{sec:parent} adds a distinct conditional mechanism:
retaining the physical scalar correlations of a constructed larger
theory (the parent) would supply the whole positive hierarchy, while
coefficient and remainder bounds would give the quartic expansion.
One-loop and tree checks for a concrete
$N=1$ parent test the compatibility of the proposed ingredients and
motivate investigating it; the quantum construction and estimates
remain open.

The positive completions draw on earlier work on unphysical propagator
poles and analytic perturbation theory
\cite{Redmond1958,ShirkovSolovtsov1997,MiltonSolovtsov1997}.
Generalized free fields \cite{Greenberg1961}, distributional Borel
methods \cite{CalicetiGrecchiMaioli1986,CalicetiGrecchiMaioli1987Erratum},
and flat RG repairs \cite{KlaczynskiKreimer2014,MaiezzaVasquez2021}
likewise provide known tools. Here these tools bound the error after truncating the collective
two-point expansion on all four-dimensional momentum and position
Schwartz tests, using a radial momentum integral over $P^2\geq0$.
They also give two different positive completions with the same
collective coefficients.

Table~\ref{tab:objections} connects the three objections to the questions
examined here. Rigorous construction is the common requirement behind
them: the proposed correlations must exist and satisfy the axioms
together.
\begin{table}[htbp]
\centering
\small
\caption{What each usual objection leaves to be decided for our existence question.}
\label{tab:objections}
\begin{tabular}{@{}p{.20\linewidth}p{.37\linewidth}p{.37\linewidth}@{}}
\toprule
Objection & Mechanism examined here & Remaining question\\
\midrule
Gaussian lattice limits & In the function example, taking cutoff limits
and reconstructing a function from renormalized coefficients give
different results. & Must every theory with the quartic expansion
arise from a lattice class whose limits are provably Gaussian?\\[3pt]
Borel failure & A distributional prescription and flat correction can
complete the collective two-point expansion. & Can one complete the interacting
hierarchy, with all axioms satisfied?\\[3pt]
Landau pole & A flat correction can remove a running-coupling pole
while preserving its perturbative expansion. & Do the exact correlations
admit regular point responses at finite Euclidean scales?\\
\bottomrule
\end{tabular}
\end{table}

The argument proceeds in that order. Section~\ref{sec:cutoff} examines lattice
triviality through the logarithmic function example, then removes its pole by a flat
correction to the renormalization-group flow. Section~\ref{sec:collective} derives the logarithmic collective
two-point coefficients from bubble diagrams, explains why their known geometric
sum still needs a distributional prescription, and proves that the summed distribution has the prescribed expansion
after smearing with every four-dimensional momentum or position Schwartz
test (Theorem~\ref{thm:bf}). It also constructs two positive completions
(Proposition~\ref{prop:completions}). Section~\ref{sec:poles}
distinguishes singularities of a coupling used to label a theory,
allowed timelike poles and Euclidean four-point poles excluded by the
stated regularity assumptions. Section~\ref{sec:parent} addresses interaction
by retaining the scalar correlations of a larger theory; a conditional
theorem states the heavy-mass estimates that would give the quartic
expansion. Section~\ref{sec:next} identifies the remaining construction tests and
the physical information needed to select a completion.
Appendices~\ref{app:distributional}--\ref{app:gaussian} give the
summation estimates, spectral checks and Wick-hierarchy arguments;
Appendices~\ref{app:massive}--\ref{app:parent} treat the massive saddle,
a timelike bound-state response and the parent candidate.

\section{Lattice limits and the order of construction}
\label{sec:cutoff}

The first objection concerns a particular way to construct a continuum
theory: start with scalar lattice systems and remove their cutoff.
If their limits are Gaussian, this route cannot produce the interacting
family in Conjecture~\ref{conj:existence}. To exclude that family
altogether, one needs a further statement: every OS theory with the
prescribed quartic expansion can be obtained through such a route.
We examine why that statement is additional content, using a function
with two explicitly calculable construction procedures. The first sums
at finite cutoff, then removes the cutoff while tuning a nonnegative
bare parameter. The second renormalizes and removes the cutoff from
each formal coefficient, then sums. Their different outcomes identify
the connection that a general lattice nonexistence inference needs.

During cutoff removal, one tunes $h_B(\Lambda;g)$, when possible,
to keep $g$ and the other physical normalization conditions fixed.
This is sometimes called running of the bare coupling. Energy running
$\bar g(Q;g)$ in \eqref{eq:label-and-running} instead varies $Q$
within one fixed continuum theory.

\subsection{What the triviality results rule out}

Aizenman and Duminil-Copin prove Gaussianity of the four-dimensional
scaling limits covered by their theorem for ferromagnetic Ising-type
systems and the associated one-component lattice quartic fields,
under the hypotheses of their theorem and corrigendum
\cite{AizenmanDuminilCopin2021,AizenmanDuminilCopin2024Corrigendum}.
Other rigorous results treat further scalar classes and weak-coupling
regimes of $O(N)$ models
\cite{Panis2026,BauerschmidtBrydgesSlade2014}. Together with the lattice
and RG analyses of L\"uscher and Weisz
\cite{LuscherWeisz1987,LuscherWeisz1989}, these give substantial support
to the conventional triviality picture. Tuning bare parameters within
a class proved to have Gaussian limits
cannot produce an interacting limit.

The existence criterion \eqref{eq:matching} fixes the renormalized
coefficients, but does not require a representation by a specified
scalar lattice measure. Inferring nonexistence therefore requires a
theorem that every family meeting this criterion has such a
representation, or another argument ruling out all remaining families.
Starting with renormalized coefficients is a legitimate constructive
possibility: Rivasseau's planar four-dimensional matrix construction
provides a precedent on the asymptotically free, negative-coupling side
\cite{Rivasseau1984Planar}. That result does not settle the finite-$N$
vector problem here. Its relevance is that coefficientwise
renormalization can precede summation in an actual construction.

\subsection{The same formal coefficients, two different outcomes}

The example isolates the two orders of construction for a real
function; no field axioms are imposed here.
Let $x=P^2/\mu^2>0$ be a squared probing momentum divided by a fixed
reference value, and let $L=\Lambda^2/\mu^2$ be the squared cutoff
ratio. Thus changing $x$ probes a different momentum, while changing
$L$ changes the regulator. The reference point is $x=1$.
Restrict the bare parameter to $g_0\geq0$, mimicking the stability
restriction on the coupling of a scalar cutoff action with a quartic
interaction. This domain restriction is part of the function example.
Start from the formal series
\begin{equation}
 \widehat Y_L(x;g_0)
 =\sum_{j\geq0}[-\log(L/x)]^j g_0^{j+1}.
 \label{eq:toy-bare}
\end{equation}
At fixed $L>x$, the geometric series converges when
$g_0\log(L/x)<1$ and has the following rational sum. The displayed
Borel--Laplace integral extends that sum to every $g_0>0$, including
values outside the convergence interval: its integrand decays
exponentially because $L>x$.
\begin{equation}
 \begin{aligned}
 Y_L(x;g_0)&=\int_0^\infty e^{-\xi/g_0}(x/L)^\xi\,d\xi
          =\frac{g_0}{1+g_0\log(L/x)},\quad g_0>0,\\
 0&\leq Y_L(x;g_0)\leq\frac1{\log(L/x)},\quad g_0\geq0,
 \end{aligned}
 \label{eq:toy-bound}
\end{equation}
Here $\xi$ is the Borel variable. For the series starting at $g_0$, we
use $\int_0^\infty e^{-\xi/g_0}\xi^j/j!\,d\xi=g_0^{j+1}$;
no $1/g_0$ prefactor is needed. Expanding $(x/L)^\xi$ therefore
recovers \eqref{eq:toy-bare} within its convergence interval.
Set $Y_L(x;0)=0$. The bound holds however large $g_0$ is. Consequently
$Y_L(x;g_0(L))\to0$ as $L\to\infty$ along every allowed bare
trajectory, at each fixed $x$. The convergence is uniform on fixed
compact subsets of $x>0$, and also holds after compactly supported
smooth smearing. Summing the regulated series first therefore gives
a zero cutoff limit, regardless of how the numerical bare parameter
is tuned within its allowed domain.

Now renormalize the formal series before summing it. Define $g$ by
the formal subtraction condition $\widehat Y_L(1;g_0)=g$, then express
$g_0$ as a formal series in $g$. The renormalized reference value $g$
will be held fixed as $x$ varies; $g_0$ is the cutoff parameter.
The subtraction relation and its formal inverse are
\begin{equation}
 g=\sum_{j\geq0}(-\log L)^j g_0^{j+1},\qquad
 \widehat g_0(g;L)=\sum_{j\geq0}(\log L)^j g^{j+1}.
 \label{eq:toy-inverse}
\end{equation}
The hat on $\widehat g_0$ marks the formal inverse. Its rational
continuation would assign the numerical value $g_0=g/(1-g\log L)$,
which is in the permitted nonnegative domain only for $g\log L<1$.
At $x=1$, the bound in \eqref{eq:toy-bound} forces every permitted
numerical reference value to tend to zero. Formal inversion is still
possible at each finite $L$: substitution and collection of equal
powers of $g$ give
\begin{equation}
 \widehat Y_L(x;\widehat g_0(g;L))=\sum_{j\geq0}(\log x)^j g^{j+1}.
 \label{eq:toy-renormalized}
\end{equation}
Every coefficient is now independent of $L$, so removing the cutoff
separately from each coefficient leaves the series unchanged. Only
after those operations do we sum and choose a numerical input $g>0$,
obtaining
\begin{equation}
 q(x,g)=\frac{g}{1-g\log x},\qquad q(1,g)=g.
 \label{eq:toy-q}
\end{equation}
The geometric series converges when $|g\log x|<1$; its positive-axis
Borel integral gives the same expression for $\log x<1/g$.
Elsewhere we use the displayed rational continuation, away from
its pole. For fixed $g$, $q(x,g)$ is the prediction as the probing
momentum $x$ varies; changing $g$ selects another member of this
function family. The second procedure produces a nonzero function
from the renormalized coefficients without requiring a nonnegative
numerical bare trajectory reaching it.

There is a precise reason for the discrepancy. Numerically holding
$g>0$ fixed in \eqref{eq:toy-inverse} requires a bare parameter that
diverges at $L=e^{1/g}$ and becomes negative beyond it. The permitted
bare domain cannot maintain that reference value to arbitrarily large
cutoff. The formal inverse still exists, since each of its coefficients
is finite at each finite $L$. At cutoffs where the numerical inverse
is permitted, the two summed functions agree exactly. They differ as
complete construction procedures because the allowed bare trajectories
cannot reach the second result.

The example shows that a vanishing limit within an allowed bare domain
does not, as a general principle, force the renormalized coefficients
to sum to zero. It makes no claim about an interchange in the
Aizenman--Duminil-Copin proof. The further question for quartic theory
is whether every OS family with the prescribed expansion must arise
from a cutoff class whose limits are Gaussian. The scalar example
does not answer that field-theory question.

\subsection{A flat term changes the exact RG flow}

The nonzero answer still has a Landau pole, at $x=e^{1/g}$, and obeys
$\partial_{\log x}q=q^2$ with its supplied label $g$ held fixed.
Can a function with exactly the same expansion
and reference value have regular running at every scale? This example
allows us to answer directly. To construct a regular response, we
prescribe $\widetilde q(x,g)\to1$ as $x\to\infty$. Choose its reciprocal
to be $1+\log(1+c/x)$: it tends to one at large $x$, while for
$x\ll c$ it approximates $1+\log c-\log x$. This endpoint is
additional information supplied in the completion. The exact reference condition
$\widetilde q(1,g)=g$ fixes $c(g)$, giving, for $0<g<1$,
\begin{equation}
 \widetilde q(x,g)^{-1}
 =1+\log\left(1+\frac{c(g)}x\right),
 \qquad c(g)=e^{1/g-1}-1>0.
 \label{eq:toy-c}
\end{equation}
Since $c(g)>0$, the reciprocal is positive and at least one for every
$x>0$. The completed
function therefore increases from zero in the infrared to one in the
ultraviolet, with no finite-scale pole.

To compare this completion with $q$, set $E_g=e^{1-1/g}$ and rewrite
the same function as
\begin{equation}
 \widetilde q(x,g)
 =\frac{g}{1-g\log x+g\log[1+(x-1)E_g]}.
 \label{eq:toy-completed}
\end{equation}
The added logarithm vanishes at $x=1$, preserving the reference value
exactly.

At each fixed $x>0$, the added logarithm is $O_x(e^{-1/g})$.
Subtracting \eqref{eq:toy-q} from \eqref{eq:toy-completed} then gives
$\widetilde q-q=O_x(g^2e^{-1/g})$, uniformly on fixed compact momentum
intervals. Every weak-coupling coefficient is unchanged. At the original
pole, however, $x=e^{1/g}$ moves with $g$: the added logarithm is then
of order one. It removes the pole without contradicting the
fixed-momentum expansion.

Differentiating \eqref{eq:toy-c} at fixed reference value $g$ gives
the exact beta function of this constructed response:
\begin{equation}
 \frac{\partial\widetilde q}{\partial\log x}
 =\widetilde\beta(\widetilde q),\qquad
 \widetilde\beta(v)=v^2(1-e^{1-1/v}).
 \label{eq:toy-beta}
\end{equation}
It agrees with $v^2$ to every perturbative order, but has an
ultraviolet-attractive zero at $v=1$, with
$\widetilde\beta'(1)=-1$: the predicted value approaches one as the
probing momentum grows. The correction changes the exact flow while
preserving its perturbative beta function. Related flat
modifications of RG flow appear in
\cite{KlaczynskiKreimer2014,MaiezzaVasquez2021}.

\begin{figure}[htbp]
\centering
\begin{tikzpicture}
\begin{axis}[width=.68\linewidth,height=4.1cm,
 xlabel={$v=\widetilde q(x,g)$},ylabel={$\beta(v)=dv/d\log x$},
 xmin=0,xmax=1.15,ymin=-.25,ymax=1.4,
 axis lines=left,legend style={draw=none,font=\small,at={(.03,.97)},anchor=north west},
 samples=130,domain=.002:1.15]
\addplot[black,dashed,thick]{x^2};
\addlegendentry{$v^2$}
\addplot[blue!60!black,thick]{x^2*(1-exp(1-1/x))};
\addlegendentry{$v^2(1-e^{1-1/v})$}
\addplot[only marks,mark=*,blue!60!black] coordinates {(1,0)};
\node[anchor=south,font=\small] at (axis cs:1,.15) {$v=1$};
\end{axis}
\end{tikzpicture}
\caption{A correction smaller than every power near $v=0$ changes the
global energy evolution. The completed flow approaches $v=1$ as
$x\to\infty$. The constructed family attains $0<v<1$, where the
solid completed beta function is positive; its continuation is negative
just above one.
Here $x=P^2/\mu^2$ varies while the supplied reference label
$g$ is fixed; the derivative is with respect to $\log x$, not $\log P$.}
\label{fig:beta-comparison}
\end{figure}

Two successive changes of reference scale should give the same
prediction as using the combined momentum ratio directly. If the
response at $x_2$ is used as the new reference value, this requires
\begin{equation}
 \widetilde q(x_1x_2,g)
 =\widetilde q\bigl(x_1,\widetilde q(x_2,g)\bigr).
 \label{eq:toy-composition}
\end{equation}
Indeed, \eqref{eq:toy-c} implies
$c(\widetilde q(x,g))=c(g)/x$. This proves the composition law:
the successive changes divide $c(g)$ by $x_2$ and then $x_1$.
Together with the exact reference normalization, it shows that the
completion defines a consistent renormalization-group flow.

We have obtained a regular, nonzero function with the same expansion
at each fixed momentum and uniformly on fixed compact momentum
intervals. No nonnegative bare trajectory has this function as its
cutoff limit: every such limit vanishes. The next section obtains these
logarithmic coefficients in a leading-$1/N$ collective two-point
calculation and tests
whether they can be realized by distributions and a generalized free
field satisfying the OS axioms. It also gives \eqref{eq:toy-completed}
a positive spectral interpretation. The finite-$N$ interacting target
additionally requires its nonzero higher connected correlations.

\section{From a perturbative sum to a positive quantum field}
\label{sec:collective}

The second objection is failure of ordinary Borel summation. To test
what that failure leaves open, we use the two-point expansion of a
collective field formed from the singlet composite $\Phi_a\Phi_a$ at
leading order in $1/N$. Bubble diagrams make its coefficients explicit,
so we can compare a summation prescription with the full expansion
and then check the field axioms. This supplies a soluble test using
coefficients from quartic perturbation theory.

At fixed momentum the bubble series has a known geometric sum.
Its Euclidean pole still requires a prescription when the correlation
is integrated against a test function reaching the pole. We derive one
prescription on every Schwartz test, including tests for which the
ordinary Borel integral fails. We then construct two positive spectral
covariances and choose their contact terms to preserve the prescribed
coefficients. Wick's rule gives the higher correlations of generalized
free fields: their connected functions above order two vanish. The
example therefore tests distributional summation and positivity for
the collective expansion. Conjecture~\ref{conj:existence} requires the
distinct massive finite-$N$ elementary expansion, including its
nonzero higher connected functions.

\subsection{Where the expansion comes from}

Use the bare prescription \eqref{eq:bare}, retaining its leading
diagrammatic order in $1/N$. Let $w=P^2$ be the Euclidean squared
channel momentum and let $M$ be the renormalized elementary pole mass
in this calculation. The momentum-independent tadpole contribution
is absorbed into $M$; further momentum-dependent self-energy
corrections are suppressed by $1/N$. With a regulator understood,
the two-propagator bubble and the chain in one four-point channel are
\begin{equation}
 \begin{aligned}
 \Pi_B(w;M)&=\int_{\rm reg}\frac{d^4k}{(2\pi)^4}
       \frac1{(k^2+M^2)((k+P)^2+M^2)},\\
 F_B(w)&=h_B\sum_{r\geq0}\left(-\frac{h_B\Pi_B(w;M)}2\right)^r
       =\frac{h_B}{1+h_B\Pi_B(w;M)/2}.
 \end{aligned}
 \label{eq:bubble}
\end{equation}
The last equality identifies the rational function whose formal
expansion generates the chain; no numerical summation is needed yet.
With $p_1,\ldots,p_4$ denoting the external elementary-field momenta,
the leading-$1/N$ connected amputated four-point amplitude is
\begin{equation}
 \mathcal A_{abcd}
 =-\frac1N\bigl[\delta_{ab}\delta_{cd}F_B((p_1+p_2)^2)
       +\text{the two crossed channels}\bigr]+O(N^{-2}).
 \label{eq:elementary-channel}
\end{equation}
Each successive bubble supplies an index loop compensating the
additional vertex factor $1/N$; its symmetry factor gives the $1/2$
in \eqref{eq:bubble}. The chains therefore retain arbitrarily many
coupling orders within the single displayed order of the $1/N$
expansion. They give explicit elementary four-point channel
coefficients from which to obtain the collective two-point series
\cite{ColemanJackiwPolitzer1974,Romatschke2023}.

The factor $N^{-1/2}$ makes the collective two-point fluctuations
order $N^0$. We include the further factor $h_B/4$ so that their
kernel has the form $h_B/4-F_B/4$, exposing the elementary four-point
channel after a local subtraction. With the vacuum expectation
subtracted, define
\begin{equation}
 \chi_N=\frac{h_B}{4\sqrt N}
       \left(\Phi_{B,a}\Phi_{B,a}
                   -\langle\Phi_{B,a}\Phi_{B,a}\rangle\right)
 \label{eq:chi}
\end{equation}
This remains a formal bare composite at fixed regulator. Diagram
counting gives connected $r$-point coefficients of order
$N^{(2-r)/2}$ for $r\geq2$ at fixed quartic coupling. Retaining the
order-$N^0$ hierarchy therefore leaves only a connected two-point
kernel; its higher correlations use Wick pairings. Its momentum-space
kernel at this retained order is
\begin{equation}
 C_{B,2}=\frac{h_B^2\Pi_B}{8}
           -\frac{h_B^2\Pi_B^2 F_B}{16}
         =\frac{h_B}{4}-\frac{F_B}{4}.
 \label{eq:collective-algebra}
\end{equation}
The first term pairs the elementary fields across the two composite
insertions; the connected elementary four-point channel supplies the
second. The chain identity in \eqref{eq:bubble} combines them into
the final expression.
The momentum constant $h_B/4$ multiplies
$\delta^{(4)}(x-y)$ in position space. Subtracting it is our contact
renormalization convention for this collective example. The remaining
kernel $-F_B/4$ becomes regulator independent under the channel
subtraction below. In the massless case its coefficients use the same
logarithmic powers as the scalar coefficients in
\eqref{eq:toy-renormalized}, at $x=w/\mu^2$, with the overall factor
fixed by the channel normalization below. This fixes the collective expansion and its extension
to coincident insertions before choosing a positive completion. It does not
remove every contact coefficient: $-F_B/4$ itself starts with the
momentum constant $-h_B/4$. After renormalization its order-$g$ term
is therefore a contact, while the separated-point correlation starts
at order $g^2$. The positive covariance will be constructed from
these prescribed coefficients below; the formal calculation has not
yet constructed a continuum composite operator.

For this calculation choose the formal channel subtraction
$F_B(\mu^2)=32\pi^2g$, where $\mu$ is the fixed reference momentum
scale. Expressing $h_B$ as a formal series in $g$ gives
\begin{equation}
 F_M(w)=\frac{32\pi^2g}
 {1+16\pi^2g[\Pi_B(w;M)-\Pi_B(\mu^2;M)]}.
 \label{eq:subtracted-chain}
\end{equation}
The subtraction removes the momentum-independent regulator
divergence in the bubble. Set $M=0$ to isolate the logarithmic
example; the conjecture continues to use a fixed positive elementary
mass. The regulator-independent difference and the prescribed
collective series are
\begin{equation}
 \begin{aligned}
 \Pi_B(w;0)-\Pi_B(\mu^2;0)&=-\frac1{16\pi^2}\log(w/\mu^2),\\
 \widehat W(g;w)&=-A\sum_{j\geq0}[\log(w/\mu^2)]^j g^{j+1},
                         \qquad A=8\pi^2.
 \end{aligned}
 \label{eq:log-series}
\end{equation}
Now choose a numerical $g>0$. It labels the collective example at
fixed $\mu$; varying $w=Q^2$ probes that selected example without
changing $g$. This is not the massive finite-$N$ theory label of
Section~\ref{sec:question}. At each fixed $w>0$, the series converges
for $|g\log(w/\mu^2)|<1$, with sum
\begin{equation}
 -\frac{F_0(w)}4=-\frac{Ag}{1-g\log(w/\mu^2)}
                 =\frac{A}{\log(w/a)},\qquad
 a=\mu^2 e^{1/g}.
 \label{eq:known-kernel}
\end{equation}
The continuation has a positive-Euclidean pole at $w=a$. Its values
away from the pole do not define an integral against every test
function, and the pointwise convergence condition does not hold
throughout an unbounded momentum support. We therefore return to the
coefficient distributions and seek one continuous linear rule for
their smeared sum, including tests crossing the pole.

\subsection{Why examine Borel summation when the geometric sum is known?}

Smearing the coefficients before summing gives ordinary numbers on
which scalar Borel summation can act. The momentum measure is absorbed
into the half-line test: for a radial momentum test
$\varphi(P)=\psi(P^2)$, angular integration gives
$f(w)=\pi^2w\psi(w)$. For
$f\in\Sch([0,\infty))$, smooth up to zero with all derivatives
rapidly decreasing, define the logarithmic moments
\begin{equation}
 c_j[f]=\int_0^\infty[\log(w/\mu^2)]^j f(w)\,dw.
 \label{eq:log-moments}
\end{equation}
Every coefficient integral exists: logarithmic powers are integrable
at zero and Schwartz decay controls infinity. No condition $f(0)=0$
is needed.
Ordinary Borel summation of $-A\sum c_j[f]g^{j+1}$ would use
\begin{equation}
 B_f(\xi):=\mathcal B[\langle\widehat W,f\rangle/g](\xi)
     =-A\sum_{j\geq0}\frac{c_j[f]}{j!}\xi^j
     =-A\int_0^\infty(w/\mu^2)^\xi f(w)\,dw,
 \qquad
 \int_0^\infty e^{-\xi/g}B_f(\xi)\,d\xi.
 \label{eq:borel}
\end{equation}
We transform the series divided by $g$, so the recovery integral has
no $1/g$ prefactor, as in Section~\ref{sec:cutoff}. The Borel series
converges near zero and equals the displayed integral there; that
integral continues it to $\operatorname{Re}\xi>-1$, including the
whole positive ray. Continuation alone does not guarantee that the
recovery integral exists.

For a nonnegative Schwartz $f$ with unbounded support, retain the
part above any $R>a$:
\begin{equation}
 |B_f(\xi)|\geq A(R/\mu^2)^\xi\int_R^\infty f(w)\,dw,
 \qquad \xi\geq0.
 \label{eq:borel-growth}
\end{equation}
Because $\log(R/\mu^2)>1/g$, this positive tail makes
$e^{-\xi/g}|B_f(\xi)|$ grow exponentially. The ordinary Borel
integral diverges for every $g>0$. Gaussian momentum smearing supplies
an explicit example. Angular integration of
$\varphi(P)=e^{-P^2/\mu^2}$ gives $f(w)=\pi^2w e^{-w/\mu^2}$, and
\begin{equation}
 B_f(\xi)=-A\pi^2\mu^4\Gamma(2+\xi).
 \label{eq:gaussian-borel}
\end{equation}
The Gamma function outgrows every fixed exponential. Here ordinary
recovery fails through growth at infinity. In massive scalar
$\phi^4_4$, factorial bounds at fixed external momenta give a locally
analytic Borel transform \cite{DeCalanRivasseau1981}; continuation and
recovery along the whole positive ray require further control.
Renormalons are proposed positive-axis obstructions
\cite{Beneke1999}; the Lipatov bounds of \cite{MagnenEtAl1987} concern
renormalon-free parts and leave exact cancellation unresolved for the
one-component theory. High-loop minimal-subtraction RG data instead
show hints of instanton-dominated growth \cite{DunneMeynig2022}.
These statements do not settle summability of the physically normalized
finite-$N$ elementary hierarchy. Our collective example supplies a
different test: recovery after an explicit growth obstruction.

\subsection{Imaginary rays give a prescription and an error bound}

The positive ray amplifies the momentum tail through
$(w/\mu^2)^\xi$. On an imaginary ray this factor oscillates instead.
Put $t=\log(w/\mu^2)$ and include the change-of-variable Jacobian in
\begin{equation}
 h_f(t)=\mu^2e^t f(\mu^2e^t),\qquad
 \widehat h_f(s)=\int_{\mathbb R}h_f(t)e^{-ist}\,dt.
 \label{eq:fourier-profile}
\end{equation}
The logarithmic moments are now ordinary moments of $h_f$:
$c_j[f]=\int_{\mathbb R}t^j h_f(t)\,dt$. Both $h_f$ and its Fourier
transform are Schwartz. At $t\to-\infty$ the Jacobian controls
the endpoint even if $f(0)\ne0$; at $t\to+\infty$ rapid decay of $f$
controls the other end. The Borel transform on the two imaginary rays
is therefore a rapidly decaying Fourier transform:
\begin{equation}
 B_f(-is)=-A\widehat h_f(s),\qquad
 B_f(is)=-A\widehat h_f(-s).
 \label{eq:imaginary-transform}
\end{equation}
Use this decay to define recovery integrals directly on the rays
$\xi=-is$ and $\xi=is$, each oriented outwards from zero:
\begin{equation}
 \begin{aligned}
 I_\downarrow(f,g)&=-i\int_0^\infty e^{is/g}B_f(-is)\,ds,\\
 I_\uparrow(f,g)&=i\int_0^\infty e^{-is/g}B_f(is)\,ds.
 \end{aligned}
 \label{eq:imaginary-integrals}
\end{equation}
They converge absolutely and depend continuously on every Schwartz
test. No rotation from a convergent positive-ray integral is assumed.
The Fourier identity
$\int_0^\infty e^{isx}ds=\pi\delta(x)+i\PV(1/x)$ evaluates them:
\begin{equation}
 I_\downarrow=\langle W^0,f\rangle+i\pi Aa f(a),\qquad
 I_\uparrow=\langle W^0,f\rangle-i\pi Aa f(a),\qquad
 W^0=A\PV\frac1{\log(w/a)}.
 \label{eq:ray-values}
\end{equation}
The principal value excludes equal shrinking distances on both sides
of $w=a$. The arithmetic mean
$I_{\rm sym}=(I_\downarrow+I_\uparrow)/2$ therefore gives
$\langle W^0,f\rangle$, one real prescription for the already known
kernel on the whole test space.\footnote{These are opposite imaginary
rays, rather than upper and lower detours about one Stokes direction.
Their arithmetic mean differs from the general \textit{\'Ecalle median}
prescription; Appendix~\ref{app:distributional} gives the comparison.}
This use of Borel and Fourier methods is related to
standard distributional summation
\cite{CalicetiGrecchiMaioli1986,CalicetiGrecchiMaioli1987Erratum}.

The prescription must also retain the expansion after smearing.
For every $K\geq1$, integrating \eqref{eq:imaginary-integrals} by
parts $K+1$ times uses
$\widehat h_f^{(j)}(0)=(-i)^j c_j[f]$ and gives, for either ray and
their mean,
\begin{equation}
 \left|I(f,g)+A\sum_{j=0}^{K-1}c_j[f]g^{j+1}\right|
 \leq A g^{K+1}\left(|c_K[f]|+
                   \|\widehat h_f^{(K+1)}\|_{L^1}\right).
 \label{eq:ray-remainder}
\end{equation}
The test-function factor on the right is a continuous Schwartz seminorm,
independent of $g$, so the error is uniform on bounded Schwartz sets.

These half-line statements apply to arbitrary four-dimensional
momentum tests, not just chosen radial examples. For
$\varphi\in\Sch(\mathbb R^4)$ use
\begin{equation}
 (\mathcal R\varphi)(w)
   =\frac w2\int_{S^3}\varphi(\sqrt w\,\omega)\,d\omega,
 \qquad
 \langle W(P^2),\varphi\rangle=\langle W,\mathcal R\varphi\rangle.
 \label{eq:radial}
\end{equation}
Angular averaging is even in $\sqrt w$, so this map is smooth at zero
and continuous into half-line Schwartz space. Fourier transformation
then gives the corresponding position-space distributions.

\begin{theorem}[Borel--Fourier recovery on every Schwartz test]
\label{thm:bf}
Fix $\mu>0$, $A=8\pi^2$ and $g>0$, and put $a=\mu^2e^{1/g}$.
The ray integrals \eqref{eq:imaginary-integrals} converge absolutely
and define continuous linear functionals on $\Sch([0,\infty))$.
Their values are \eqref{eq:ray-values}, so their mean recovers
$W^0=A\PV[\log(w/a)]^{-1}$. For every fixed $K\geq1$, either ray
and their mean satisfy
\[
 \begin{aligned}
 \left|I(f,g)+A\sum_{j=0}^{K-1}c_j[f]g^{j+1}\right|
 &\leq C_{K,\mu}A g^{K+1}q_4(f),\\
 q_4(f)&:=\max_{0\leq j\leq2}\sup_{w\geq0}(1+w)^4|f^{(j)}(w)|,
 \end{aligned}
\]
with $C_{K,\mu}$ independent of $f$ and $g$.
For every $L>0$, the difference $2\pi iAa f(a)$ is $O(e^{-L/g})$
as $g\downarrow0$, uniformly on bounded subsets of the full Schwartz
space; the required decay-weight order may depend on $L$.
The continuous radial map \eqref{eq:radial} and Fourier transformation
transfer these statements to all four-dimensional momentum and
position Schwartz tests.
\end{theorem}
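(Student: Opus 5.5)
The plan is to work entirely in the logarithmic variable $t=\log(w/\mu^2)$ via \eqref{eq:fourier-profile}, so that everything reduces to Fourier analysis of the profile $h_f$.

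\textbf{Step 1: the profile map.} First I will show that $f\mapsto h_f$ is continuous from $\Sch([0,\infty))$ into $\Sch(\mathbb R)$. For $t\to-\infty$, the factor $e^t$ together with boundedness of $f$ and its derivatives near zero gives exponential decay of $h_f$ and of all its $t$-derivatives, since $\partial_t=w\partial_w$. For $t\to+\infty$, the polynomial decay of $f^{(j)}(w)$ in $w=\mu^2e^t$ gives decay faster than any power of $t$. A weighted estimate of the form $\sup(1+|t|)^2|\partial_t^j h_f|\lesssim_\mu \sup(1+w)^4|f^{(j)}|$, for $j\le2$, should control $\|\widehat h_f\|_{L^1}$, $\|\widehat h_f^{(K+1)}\|_{L^1}$ and $|c_K|$ in terms of $q_4(f)$. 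One caveat: $\widehat h_f^{(K+1)}$ corresponds to $t^{K+1}h_f$, so the constant must absorb $\log$-powers. The bound $|t|^{K+1}(1+|t|)^2 e^{t}\lesssim$ polynomial weight in $w$ makes this work, and it is the reason $C_{K,\mu}$ depends on $K$.

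\textbf{Step 2: convergence and the ray values.} With \eqref{eq:imaginary-transform} in hand, absolute convergence and continuity follow from $\widehat h_f\in L^1$. For the values, I will insert $\widehat h_f(s)=\int h_f(t)e^{-ist}dt$ and regularize with $e^{-\epsilon s}$. I then use Fubini and $\int_0^\infty e^{is(1/g-t)-\epsilon s}ds=i/(1/g-t+i\epsilon)$, and let $\epsilon\to0$ by Sokhotski–Plemelj. This gives $A\,\PV\!\int h_f(t)/(t-1/g)\,dt\pm i\pi A h_f(1/g)$. Changing back to $w$, we have $t-1/g=\log(w/a)$ and $h_f(1/g)=a f(a)$, which yields \eqref{eq:ray-values}. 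Signs and orientation will be checked carefully, and the mean then cancels the delta terms.

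\textbf{Step 3: the remainder.} The finite-order bound comes from integration by parts in $\int_0^\infty e^{is/g}\widehat h_f(s)ds$. Each step produces a boundary term $(ig)^{j+1}\widehat h_f^{(j)}(0)$, which reproduces $-c_j[f]g^{j+1}$ after using $\widehat h_f^{(j)}(0)=(-i)^jc_j$. I stop at order $K$ by writing the $K$-th boundary term as a bound $g^{K+1}|c_K|$, with the remaining integral bounded by $g^{K+1}\|\widehat h_f^{(K+1)}\|_{L^1}$. This gives \eqref{eq:ray-remainder}, and Step 1 converts it to $C_{K,\mu}q_4(f)$.

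\textbf{Step 4: the flat difference.} For the difference $2\pi iAaf(a)$, I note that $a=\mu^2e^{1/g}$ and $|f(a)|\le(1+a)^{-n}\sup(1+w)^n|f|$. Taking $n>L+1$ makes $a|f(a)|\lesssim e^{-(n-1)/g}\le e^{-L/g}$ uniformly on bounded sets; this is why the weight order depends on $L$.

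\textbf{Step 5: transfer to four dimensions.} For the transfer statement, I use that $\mathcal R$ is continuous into $\Sch([0,\infty))$: the angular average is a smooth even function of $\sqrt w$, hence smooth in $w$. Composing with the Fourier transform, which is continuous on $\Sch(\mathbb R^4)$, transfers all the bounds to four-dimensional momentum and position tests.

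\textbf{Main obstacle.} I expect the hardest part to be Step 2's justification of the Fubini and $\epsilon\to0$ interchange near $t=1/g$, together with the uniform seminorm bookkeeping of Step 1. The Plemelj limit needs $h_f\in C^1$, which is available, and a dominated bound away from the pole, which Schwartz decay supplies.
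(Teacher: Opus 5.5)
Your proposal is correct and follows the paper's own proof in Appendix~\ref{app:distributional} essentially step for step: the logarithmic profile $h_f$ and its continuity, Abel regularization with $e^{-\epsilon s}$, Fubini and the Sokhotski--Plemelj limit for the ray values, $K+1$ integrations by parts using $\widehat h_f^{(j)}(0)=(-i)^jc_j[f]$ with the logarithmic powers absorbed by the $(1+w)^4$ weight, a decay order exceeding $L+1$ for the flat difference, and radial plus Fourier transfer. Two points are checked explicitly in the paper and passed over in your proposal. First, the principal value taken symmetrically in $t$ coincides with the one taken symmetrically in $w$: the exclusion distances $a(1-e^{-\eta})$ and $a(e^\eta-1)$ have ratio $e^\eta\to1$, so the pole with residue $Aaf(a)$ contributes only $Aaf(a)\eta\to0$. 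Second, $B_f(\mp is)=-A\widehat h_f(\pm s)$ holds because the integral is the holomorphic continuation of the Borel series to $\operatorname{Re}\xi>-1$, which your Step~1 decay bounds already justify.
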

\begin{proof}
Appendix~\ref{app:distributional} proves the endpoint, continuity and
remainder estimates and the ray evaluations.
\end{proof}
Averaging Wick products formed from the two ray values need not give
Wick products of their mean. We therefore choose the mean two-point
kernel first and then form higher correlations by Wick pairings. This prescription at positive real $g$ asserts neither a
unique completion nor summability in a complex angular sector.
Appendix~\ref{app:distributional} also compares it with lateral, median
and accelerated summation.

\subsection{A positive spectrum removes the pole}

The sum $W^0$ acts on all test functions, but its pole prevents the
positive scalar-field reconstruction we seek. A scalar spectral
representation superposes free Euclidean propagators $1/(w+s)$.
Spectral positivity means that their weights are nonnegative, giving
positive state norms. For Minkowski correlations, superpose the
positive-frequency free two-point functions with the same weights.
Their Fourier transforms have support in the forward energy-momentum
cone if $s\geq0$. The Euclidean propagators have no pole
at positive Euclidean $w$, and polynomial contact terms cannot
create or cancel the pole at $a$. A principal value makes smearing
meaningful but does not cure this obstruction.

The logarithmic kernel already has a nonnegative density on its
timelike cut. Keeping that density and subtracting the forbidden pole
gives our first covariance. With the principal logarithm, Cauchy's
formula gives
\begin{equation}
 \begin{aligned}
 \rho_a(s)&=\frac{A}{\log^2(s/a)+\pi^2},\\
 C_D(w)&=\int_0^\infty\frac{\rho_a(s)}{w+s}\,ds
       =\frac{A}{\log(w/a)}+\frac{Aa}{a-w}.
 \end{aligned}
 \label{eq:dispersive}
\end{equation}
Here $s$ is mass squared and $\rho_a(s)\,ds$ is its spectral weight.
The spectral integral converges. The two apparent singularities
cancel at $w=a$, leaving $C_D(a)=A/2$.
The covariance is finite for $w\geq0$, analytic for $w>0$, and bounded
between zero and $A$. This is the analytic-perturbation-theory kernel
\cite{ShirkovSolovtsov1997,MiltonSolovtsov1997}; removing unphysical
propagator poles also has earlier precedents
\cite{Redmond1958,GhinculovEtAl1997}.

The covariance $C_D$ has a weak-coupling expansion starting with a
new order-zero contact coefficient. At fixed momentum the added
$Aa/(a-w)$ tends to the constant $A$, which
\eqref{eq:log-series} does not contain. To compare with that prescribed
expansion on all Schwartz tests, choose the Euclidean extension
\begin{equation}
 W_D=C_D-A,\qquad
 W^0-W_D=A\PV\frac{w}{w-a}.
 \label{eq:contact-match}
\end{equation}
The positive covariance $C_D$ constructs the field. The subtraction
$C_D\mapsto W_D$ is a contact change of size $A$, independent of
$g$; it is not flat. The difference we will prove flat is
$W_D-W^0$. In particular, $W_D$ is not the covariance of a positive
Gaussian probability measure.
It gives the same quantum field as $C_D$ on the standard ordered OS
test domain, where each group has support in $0<t_1<\cdots<t_r$:
no two insertion times in either ordered group coincide,
and a reflected negative time cannot coincide with a positive time.
An OS reflection form tests correlations between positive-time probes
and the time reflection of other positive-time probes.
Every contact contraction consequently vanishes in the reflection
forms, including those between different field orders. Thus $W_D$
specifies a different extension at coincidence of the same
reconstructed quantum field. Perturbative matching instead tests
this extension on arbitrary Schwartz functions, including ones that
reach coincidences. Matching fixes the contact coefficients order by
order and still permits additional flat contact terms, as in
Section~\ref{sec:question}. The extension convention must remain
explicit when its Fourier transform is used to define a response;
Section~\ref{sec:poles} returns to this issue.

To check that no perturbative coefficient has changed, a fixed-momentum
estimate is insufficient near the moving pole. In fixed momentum units,
let $p_3(f)=\max_{j=0,1}\sup_{w\geq0}(1+w)^3|f^{(j)}(w)|$.
For fixed $\mu$, the difference in \eqref{eq:contact-match} obeys
\begin{equation}
 |\langle W^0-W_D,f\rangle|
 \leq C_\mu A e^{-1/g}p_3(f).
 \label{eq:dispersive-flat}
\end{equation}
Here and below constants are independent of the probe and sufficiently
small $g$. One can see the bound by splitting the integral below
$a/2$, across $[a/2,3a/2]$, and above $3a/2$. Below the pole the
denominator supplies $1/a$. Across it, subtract $af(a)$ from $wf(w)$
inside the principal value; the first derivative and Schwartz decay
again give $O(p_3(f)/a)$. Above it, the tail is $O(p_3(f)/a^2)$.
Since $1/a=\mu^{-2}e^{-1/g}$, all three contributions are flat.

The nonnegative spectral measure in \eqref{eq:dispersive}, supported
on $s\geq0$, together with Wick pairings constructs a generalized free field
\cite{Greenberg1961,StreaterWightman2000}. For the full-Schwartz
comparison, define $S^D_{2r}$ by Wick pairings of $W_D=C_D-A$, with
odd correlations zero. On the OS test domain this hierarchy agrees
with the one constructed from $C_D$, so it reconstructs the same
positive quantum field. The bounded covariance gives the usual
factorial growth bounds. Smeared two-point functions cluster, and Wick
pairings give clustering for the hierarchy. The hierarchy with this
contact convention also preserves every prescribed coefficient of the
massless leading-$1/N$ collective Wick hierarchy. Each Wick pairing assigns every
insertion to one pair, so its contribution is a tensor product of
two-point distributions in separate pair variables. At each fixed
Wick order, subtracting the two products gives a finite sum with one
flat two-point difference in each term. The flatness bound for that factor
and continuity bounds for the remaining factors give the estimate on
arbitrary joint Schwartz tests, including tests that do not factorize.
We first select $W^0$ as the two-point
sum, choose the positive covariance $C_D$ and its matching extension
$W_D$, and then form their respective Wick hierarchies. We do not
average the higher hierarchies obtained from the
individual ray values in \eqref{eq:ray-values}.
Appendix~\ref{app:gaussian} collects the joint-test, contact-support,
growth and clustering arguments for the whole Wick hierarchy.

The reference normalization changes by a flat term in this completion:
\begin{equation}
 W_D(\mu^2)=-Ag+\frac{A}{e^{1/g}-1}.
 \label{eq:dispersive-reference}
\end{equation}
Here $g$ is the label inherited from the formal channel subtraction.
The exact two-point value $-W_D(\mu^2)/A$ differs from it, although
their expansions agree. One can instead use that exact value as the
parameter: this flat reparametrization preserves all coefficients.
Explicitly, $g_D=g-(e^{1/g}-1)^{-1}$ is locally invertible near zero,
since its derivative tends to one. Comparing the two completions at
the same exact subtraction value therefore remains possible.
Section~\ref{sec:poles} examines its running. This example establishes coefficient matching; it does not impose
the massive elementary-field normalization of the conjecture.

\subsection{The repaired RG flow supplies a second positive completion}

The flow of Section~\ref{sec:cutoff} offers a different choice with the
same coefficients. Since $\widetilde q\to1$ in the ultraviolet,
$A(1-\widetilde q)$ decays there and is a candidate covariance.
For $0<g<1$, set
\begin{equation}
 b=\frac ae-\mu^2>0,\qquad
 C_R(w)=A[1-\widetilde q(w/\mu^2,g)]
       =A\left[1-\frac1{1+\log(1+b/w)}\right].
 \label{eq:rg-covariance}
\end{equation}
Equivalently, $b=\mu^2c(g)$ with $c(g)$ from \eqref{eq:toy-c}, which
connects this spectral scale directly to the repaired flow.
Pointwise positivity alone would not establish reflection positivity.
The explicit spectral representation does:
\begin{equation}
 \begin{aligned}
 C_R(w)&=\frac{Z}{w+s_0}
          +\int_0^b\frac{\rho_R(s)}{w+s}\,ds,\\
 \rho_R(s)&=\frac{A}{[1+\log(b/s-1)]^2+\pi^2},\quad 0<s<b,\\
 s_0&=\frac{eb}{e-1},\qquad Z=\frac{Aeb}{(e-1)^2}>0.
 \end{aligned}
 \label{eq:rg-spectrum}
\end{equation}
The logarithm has a cut on $[-b,0]$ and one denominator zero at
$-s_0$. Its cut discontinuity gives the nonnegative density above,
and the residue gives the positive atom $Z$. The covariance decays
as $Ab/w$ at infinity, so a Cauchy contour supplies no further term.

Again define the matching extension by $W_R=C_R-A$. It has the same
complete smeared expansion as $W^0$:
\begin{equation}
 |\langle W_R-W^0,f\rangle|
 \leq C_\mu A e^{-1/g}p_3(f),\qquad W_R(\mu^2)=-Ag.
 \label{eq:rg-flat}
\end{equation}
For the estimate, below $a/2$ write the two kernels as
\[
 \begin{aligned}
 W_R(w)&=-\frac{A}{D_R(w)},&
 D_R(w)&:=1+\log(1+b/w),\\
 W^0(w)&=-\frac{A}{D_0(w)},&
 D_0(w)&:=\log(a/w).
 \end{aligned}
\]
Here $D_R\geq1$ and $D_0\geq\log2$, while
\[
 D_R(w)-D_0(w)=\log[1+e(w-\mu^2)/a]
                =O((w+\mu^2)/a).
\]
The denominators are bounded away from zero, so this also controls
the difference of their reciprocals on the lower interval.
The same principal-value subtraction and tail split used for
\eqref{eq:dispersive-flat} control the other regions. The hierarchy
$S^R_{2r}$ defined by Wick pairings of $W_R$ therefore has the same
collective coefficients. On the OS test domain it agrees with the
positive hierarchy constructed from $C_R$, giving a second generalized
free field in the stated contact convention.

This second choice has a direct link to the repaired scalar flow:
its normalized two-point value is exactly
$-W_R(w)/A=\widetilde q(w/\mu^2,g)$, including at the reference scale.
At $w=Q^2$ this is a prediction as the probing energy varies, while
the supplied reference label $g$ stays fixed. The selected field keeps
its spectral parameter $b$ fixed; the dispersive field instead keeps
$a$ fixed. Neither value is an elementary quartic coupling: each
constructed collective field has vanishing connected four-point
function.
The two-point spectral measures differ as well: $C_R$ has a finite
continuous mass-squared interval and an isolated mass, while $C_D$
has continuous mass weight extending to infinity and decays as
$A/\log(w/a)$. At equal inherited $g$, their exact reference values
differ. For a common small exact value $\eta$, let $\gamma$ be the local
inverse of $g_D(g)=g-(e^{1/g}-1)^{-1}$ and use inherited labels
$\gamma(\eta)$ for $D$ and $\eta$ for $R$. Then both satisfy
$W_j(\mu^2)=-A\eta$, and the flat change
$\gamma(\eta)-\eta=O(e^{-1/\eta})$ preserves all coefficients.
Their spectral scales must be evaluated at their respective labels;
Appendix~\ref{app:positive} gives this exact reference comparison.

Table~\ref{tab:kernels} separates the roles of the kernels. Appendix~\ref{app:positive}
gives the spectral and flatness checks for both completions, including
the isolated positive mass in $C_R$.
\begin{table}[htbp]
\centering
\small
\caption{Summation, positive covariances and the contact convention used for matching.}
\label{tab:kernels}
\begin{tabular}{@{}>{\raggedright\arraybackslash}p{.22\linewidth}>{\raggedright\arraybackslash}p{.33\linewidth}>{\raggedright\arraybackslash}p{.39\linewidth}@{}}
\toprule
Kernel & Role & Relation to the expansion and quantum field\\
\midrule
$W^0$ & Distribution recovered by the ray mean &
Has the logarithmic expansion, but its Euclidean pole prevents a positive scalar spectrum.\\[3pt]
$C_D,\ C_R$ & Positive spectral covariances &
Construct two different generalized free fields; their weak-coupling expansions start with the contact coefficient $A$.\\[3pt]
$W_D=C_D-A$, $W_R=C_R-A$ & Euclidean extensions used for coefficient matching &
Preserve every collective Wick coefficient and reconstruct the same field as the respective $C_D$ or $C_R$.\\
\bottomrule
\end{tabular}
\end{table}

\begin{proposition}[Two positive completions with the collective expansion]
\label{prop:completions}
Fix $\mu>0$, $A=8\pi^2$ and $0<g<1$, with
$a=\mu^2e^{1/g}$ and $b=a/e-\mu^2$.
The covariances $C_D,C_R$ of \eqref{eq:dispersive} and
\eqref{eq:rg-covariance} have nonnegative spectral measures supported
on $s\geq0$, no positive-Euclidean pole, and Wick hierarchies
satisfying the OS axioms, including reconstruction growth and clustering.
The hierarchies of $W_j=C_j-A$, $j=D,R$, give the same respective
reflection forms and reconstructed fields on the stated strictly
ordered OS domain excluding coincidences. The subtraction is a
$g$-independent contact term and is not flat.

For all sufficiently small $g$,
\[
 |\langle W_j-W^0,f\rangle|
 \leq C_\mu A e^{-1/g}p_3(f),\qquad f\in\Sch([0,\infty)),
\]
with $C_\mu$ independent of $g$ and $f$.

After radial transfer and Fourier transformation, at every fixed
Wick order $2r$ the hierarchies formed from $W_j$ and $W^0$ differ by
$O(e^{-1/g})$, uniformly on bounded sets of arbitrary joint Schwartz
tests. Thus they preserve every massless leading-$1/N$ collective
Wick coefficient. The distinct spectra remain distinct after the
flat relabelling to one exact reference value described above.
\end{proposition}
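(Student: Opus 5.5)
The proof splits into three parts, taken in order: spectral positivity and the OS axioms, the two-point flatness bound, and transfer of that bound to the Wick hierarchy.

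\emph{Spectral step.} For $C_D$, I would verify \eqref{eq:dispersive} by Cauchy's formula. Take $F(z)=A/\log(-z/a)$ with the principal logarithm, cut along $z\geq0$. On the cut, $\log(-s/a)=\log(s/a)\mp i\pi$, so the discontinuity is $2\pi i\,\rho_a(s)$ with $\rho_a\geq0$. The only additional singularity is the pole at $z=a\cdot(-1)$, i.e.\ at $w=a$ in $C_D(w)=F(-w)$. Its residue produces the term $Aa/(a-w)$, which is subtracted. Decay like $1/\log|z|$ makes the large circle vanish, and the behaviour at $z=0$ is harmless because $1/\log\to0$ there.

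For $C_R$, I would set $G(z)=A[1-1/(1+\log(1-b/z))]$ at $z=-w$. The cut $[0,b]$ gives the density $\rho_R$. The zero of $1+\log(1+b/w)$ at $w=-s_0$, with $s_0=eb/(e-1)$, gives the atom $Z$, and the derivative computation confirms $Z>0$. Since $G=O(1/z)$ at infinity, no arc contribution remains. Nonnegative measures on $s\geq0$ then give Euclidean covariances that are completely monotone in $w$, bounded (between $0$ and $A$ by the representation), and without positive-Euclidean poles.

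The OS axioms follow from the standard generalized free field facts \cite{Greenberg1961,StreaterWightman2000}. Each $1/(p^2+s)$ is reflection positive, and so is a positive superposition. Gaussian (Wick) hierarchies with reflection-positive covariance are reflection positive. Growth bounds come from $(2r-1)!!$ times powers of the bounded covariance seminorm. Clustering follows from decay of the smeared two-point function, since $\rho\,ds$ has no atom at $s=0$.

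For the contact statement, observe that $A$ is $A\,\delta^{(4)}(x-y)$ in position space. In an ordered OS reflection form every pairing joins points with distinct times, so contractions of type $\delta$ vanish and the $W_j$- and $C_j$-hierarchies give identical forms. The subtraction is manifestly $g$-independent, hence not flat.

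\emph{Flatness bound.} I would prove $|\langle W_j-W^0,f\rangle|\leq C_\mu Ae^{-1/g}p_3(f)$ using the three-region split sketched after \eqref{eq:dispersive-flat}. For $D$, the difference is $-A\,\PV\, w/(w-a)$:
\begin{itemize}
\item On $w<a/2$, bound $|w/(w-a)|\leq 2w/a$ and integrate against $(1+w)^{-3}p_3$.
\item On $[a/2,3a/2]$, write $wf(w)=af(a)+(wf(w)-af(a))$. The PV of the constant over a symmetric interval vanishes, and the mean-value theorem bounds the rest by $a\sup|(wf)'|\lesssim a\cdot p_3/a^2$.
\item On $w>3a/2$, the ratio is bounded and $\int_{3a/2}^\infty|f|\lesssim p_3/a^2$.
\end{itemize}
Since $1/a=\mu^{-2}e^{-1/g}$, every piece is flat.

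For $R$, the lower region uses the denominator comparison displayed after \eqref{eq:rg-flat}. In the middle and upper regions, $W_R$ is bounded and smooth and $W^0$ is handled as before. In the pole window I would subtract $W_R(a)f(a)$, noting that $|W_R'|\lesssim A/a$ there.

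\emph{Hierarchy transfer.} Radial transfer via \eqref{eq:radial} and Fourier transformation are continuous, so the two-point bound holds on all 4D momentum and position tests with some Schwartz seminorm. At Wick order $2r$, I would telescope the difference of products of pairings into $r$ terms, each containing one factor $W_j-W^0$ and otherwise $W_j$ or $W^0$. For a joint test that does not factorize, use that each two-point distribution is continuous on $\Sch$ of its pair variable. Nuclearity (the kernel theorem) then makes the tensor product a continuous functional whose norm is bounded by a product of seminorm bounds. One factor carries $e^{-1/g}$, and the others need bounds uniform in small $g$; for $W^0$ these come from Theorem~\ref{thm:bf}.

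\emph{Main obstacle.} The step I expect to be hardest is exactly this uniformity: showing that $W_j$ and $W^0$ have seminorm bounds independent of $g$ near the moving pole, so that the tensor-product estimate stays $O(e^{-1/g})$ on bounded joint sets. I would first try to obtain it from the PV split above, which gives $g$-uniform control.

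Finally, the spectra remain distinct after relabelling. $C_R$ contains an isolated atom at $s_0$, while $\rho_a$ is an absolutely continuous measure with unbounded support. This distinction survives the flat change of label $\gamma(\eta)$, which only moves $a$ and $b$.
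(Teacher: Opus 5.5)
Your proposal is correct and follows essentially the paper's route: Cauchy-contour spectral representations, standard generalized-free-field OS facts with contact contractions vanishing on the strictly ordered domain, the three-region split about the moving pole for flatness, and radial/Fourier transfer followed by telescoping over Wick pairings. The step you flag as the main obstacle is in fact immediate, since $|W_j|\leq A$ pointwise and Theorem~\ref{thm:bf} already gives the $g$-independent bound $|\langle W^0,f\rangle|\leq C_\mu A\,q_4(f)$; you are only thinner than the paper on joint tests, where growth needs the Osterwalder--Schrader conversion of the factorized bound into their joint-test condition $E0'$, and clustering follows from Riemann--Lebesgue with a bounded multiplier (the absence of an atom at $s=0$ plays no role).
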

\begin{proof}
Appendix~\ref{app:positive} gives the spectral representations,
flatness estimates and exact reference matching;
Appendix~\ref{app:gaussian} gives the joint-test and reconstruction checks.
\end{proof}
The positive completion supplies information beyond the imaginary ray
choice, which changes only the term sampling $f(a)$.
The interacting finite-$N$ elementary hierarchy remains to be constructed.

\section{Which poles can a quantum theory have?}
\label{sec:poles}

The preceding construction removes the pole from a two-point function.
What does that establish about the third objection, the Landau pole
of quartic theory? The answer depends on what has a pole. A parameter
used to label a regular theory can diverge. A timelike correlation
pole can describe a particle. For a Euclidean four-point subtraction
response, finiteness at each probing scale is a further property to
establish. We examine these three cases to identify the regularity conditions
needed for the four-point coupling defined below.

\subsection{A singular label and a regular value of the correlation}

Keep the dispersive field of \eqref{eq:dispersive} fixed, including its
spectral scale $a$. At the fixed reference momentum $\mu$, its inherited
label is $g=1/\log(a/\mu^2)$. At another probing scale $Q$, the same
field has the inherited running coordinate
\begin{equation}
 \bar g_{\rm inh}(Q;a)=\frac1{\log(a/Q^2)},\qquad
 \frac{d\bar g_{\rm inh}}{d\log Q^2}=\bar g_{\rm inh}^{\,2}.
 \label{eq:inherited-running}
\end{equation}
The supplied reference label $g$ stays fixed while $Q$ varies.
The inherited coordinate has
a pole at $Q^2=a$, whereas the covariance does not:
$C_D(a)=A/2$. The reciprocal coordinate
$1/\bar g_{\rm inh}=\log(a/Q^2)$ passes regularly through that scale.
Thus the coordinate inherited from the formal channel sum diverges
while the selected field stays regular. For $Q^2\ne a$ it encodes
the same fixed spectral scale through
$a=Q^2\exp(1/\bar g_{\rm inh})$. Above $Q^2=a$ this coordinate is
negative; the covariance and its positive spectrum are unchanged.

In the contact convention of Section~\ref{sec:collective}, use the
completed correlation itself to define the predicted value:
\begin{equation}
 v(Q;a)=1-\frac{C_D(Q^2)}A=-\frac{W_D(Q^2)}A
        =\bar g_{\rm inh}(Q;a)-\frac1{e^{1/\bar g_{\rm inh}(Q;a)}-1}.
 \label{eq:exact-two-point-value}
\end{equation}
The last expression holds for $Q^2\ne a$; its continuous extension
at $Q^2=a$ is $v=1/2$. The reference values $g$ and
$g_D=v(\mu;a)$ stay fixed. For a small positive inherited value the
change is flat. Near the apparent pole both terms on the right
diverge, but their difference is regular. In $t=\log(Q^2/a)$,
\begin{equation}
 v=V(t)=\frac1{1-e^{-t}}-\frac1t,\qquad
 V(t)=\frac12+\frac{t}{12}+O(t^3).
 \label{eq:regular-running}
\end{equation}
Moreover,
\begin{equation}
 V'(t)=\frac1{t^2}-\frac1{4\sinh^2(t/2)}>0\quad(t\ne0),\qquad
 V'(0)=\frac1{12}.
 \label{eq:regular-running-derivative}
\end{equation}
The normalized value runs smoothly from zero to one, with $v=1/2$
at the old pole. Since $V$ is strictly increasing, its beta function
can be written
$\beta_v(v)=V'(V^{-1}(v))$ for $0<v<1$, using $\log Q^2$ as the
scale variable. It has the perturbative expansion $v^2$ near zero,
but no finite-scale pole. It approaches the ultraviolet endpoint as
$1-v\sim1/\log(Q^2/a)$. The RG-derived completion of
Section~\ref{sec:cutoff} approaches its endpoint faster, with
$\widetilde\beta'(1)=-1$. Their common perturbative coefficients
determine neither exact flow.
Appendix~\ref{app:massive} considers a massive version of the
bubble-chain example and another way to obtain a positive kernel
without a Euclidean pole.

This is a two-point normalization of a generalized free collective
field, whose connected four-point function vanishes. Its lesson for
the conjecture is a procedure: complete the correlation, impose the
exact subtraction condition on that correlation, and then follow its
value as the energy changes in the selected theory. This exact
normalized value remains regular where the inherited coordinate diverges.

\subsection{A timelike pole can be physical}

A positive spectral atom contributes $R/(w+m_{\mathrm{bs}}^2)$ to a
Euclidean covariance, with $R,m_{\mathrm{bs}}^2>0$. This is regular
for $w>0$, as the atom in \eqref{eq:rg-spectrum} illustrates. At
timelike squared momentum $\sigma$, put $w=-\sigma-i0$ to obtain
\begin{equation}
 \frac{R}{m_{\mathrm{bs}}^2-\sigma-i0}
 =R\PV\frac1{m_{\mathrm{bs}}^2-\sigma}
       +i\pi R\delta(\sigma-m_{\mathrm{bs}}^2).
 \label{eq:physical-timelike-pole}
\end{equation}
It has a finite pairing with each fixed Schwartz or compactly
supported smooth packet in $\sigma$. The principal value subtracts
the packet's value at the pole, and the delta term evaluates that
value. The pole represents a particle or bound state in a
well-defined quantum theory. Such bound-state poles occur in
constructed two-dimensional scalar models, for example the
$\lambda({:\!\phi^6\!:}-{:\!\phi^4\!:})$ theory at sufficiently small
positive $\lambda$ \cite{DimockEckmann1976,Koch1979}.
No bound uniform as the packet width tends to zero is required.

A coupling defined from a timelike response can inherit this particle
pole. Appendix~\ref{app:timelike} gives an explicit composite-response
example in a constructed interacting theory. This less usual
convention defines a coupling from a timelike composite response;
it is a mathematical curiosity compared with the elementary
Euclidean four-point subtraction used for the conjecture. Its
particle pole illustrates an allowed coupling singularity without
answering the Euclidean Landau-pole objection.

\subsection{The physical subtraction coupling and its regularity}

For a theory $\mathcal T_g$, define a four-point subtraction coupling
from its scalar Schwinger functions, using the projection, normalization
and contact convention fixed in Section~\ref{sec:question}. Let
$H_g=\mathcal P G_{4,c}$ be the fixed component projection of the
connected four-point Fourier distribution and $D_{2,g}$ the exact
scalar two-point Fourier distribution, with their overall conservation
deltas removed. Let $p_i^*$ denote the fixed symmetric Euclidean
subtraction momenta at the reference scale $\mu$. Use their scaled values
$p_i(Q)=(Q/\mu)p_i^*$ and independent variables
$\xi=(p_1,p_2,p_3)\in\mathbb R^{12}$, with $p_4=-p_1-p_2-p_3$.
Write $\xi(Q)$ for their scaled configuration. The packet averages below smear these Euclidean distributions before
we divide by the four external propagation factors.

Take any fixed smooth compactly supported packet shape $\varphi$
with integral one; such packets are Schwartz functions. A packet
centred at $x$ with shrinking width is
\begin{equation}
 \varphi_{\epsilon,x}(y)=\epsilon^{-d}
           \varphi\bigl((y-x)/\epsilon\bigr),\qquad \epsilon>0.
 \label{eq:subtraction-packet}
\end{equation}
Use $d=12$ for a numerator shape $\psi$ and $d=4$ for each external
shape $\eta_i$. Thus the numerator is tested in all independent
momentum components. Require each average to have a finite limit
independent of the allowed packet shape, and require every external
limiting value to be nonzero. Then define
\begin{equation}
 \bar g(Q;g)=\mathcal N
 \lim_{\epsilon,\epsilon_i\downarrow0}
 \frac{\langle H_g,\psi_{\epsilon,\xi(Q)}\rangle}
      {\prod_{i=1}^4\langle D_{2,g},
                       \eta_{i,\epsilon_i,p_i(Q)}\rangle}.
 \label{eq:exact-four-point-coupling}
\end{equation}
Here $\mathcal N$ is the fixed numerical normalization; widths may
shrink independently. The nonzero external limits make the ratios
defined once the widths are sufficiently small. Smearing precedes division, so no quotient or
restriction of singular distributions is assumed. Requiring all
packet shapes prevents a symmetric packet from concealing a pole.
The exact reference condition is $\bar g(\mu;g)=g$. That supplied
label stays fixed while $Q$ probes the selected theory; existence
of its limits at other scales is a separate check.

If the numerator distribution is locally given by a continuous function
near the probing curve and the external distributions are locally given
by continuous nonzero functions, the quotient is continuous and bounded
on every compact interval of $Q>0$. These sufficient conditions exclude a finite Euclidean
pole. Analyticity gives stronger regularity, but is unnecessary for
this boundedness. Existence of a point value at one scale alone does
not control nearby scales.

Ruelle's result gives sufficient conditions for the connected
four-point function to be analytic. To apply it here, additionally
require the Euclidean distributions to agree near the chosen
nonexceptional configurations, including the same contact terms, with
the continuation of those fields' time-ordered correlations. If his
regularization and compatible time-ordering hypotheses also hold,
the connected numerator is analytic in all independent momenta
\cite[Sections~4--5]{Ruelle1961}. With analytic nonzero external
propagators, the quotient \eqref{eq:exact-four-point-coupling} is then
analytic at every finite $Q>0$: a pole in its perturbative sum must be
absent from that exact response. Appendix~\ref{app:gaussian} states the
hypotheses and argument. The gap in a general time-ordering
construction is noted in \cite[Section~10]{KravchukQiaoRychkov2021}.
Contacts invisible to reconstruction may change an off-shell convention,
but cannot alter a fixed physical measurement of the reconstructed
theory; the examples in Appendix~\ref{app:gaussian} concern this
convention freedom.

If the packet conditions fail, this prescription need not give a
finite point coupling. Tempered correlations still give finite answers
on each fixed Schwartz packet. If an external propagator has zero
point value, extending the amputated limit requires a separate
cancellation check. Nor is every failed point limit a Landau pole.

Our definition of $\phi^4$ in Section~\ref{sec:question} therefore does
not by itself exclude a pole in the point-valued physical four-point
coupling: OS reconstruction and quartic coefficient matching do not
supply the additional conditions above. We expect regular responses
at nonexceptional Euclidean momenta, and the Hilbert-space and
time-ordering properties used in Ruelle's argument, for the massive
quartic theory envisaged here. A genuine pole would require a failure
of the sufficient regularity conditions above, an exotic possibility
that would need its own explanation.

Our main response to the Landau-pole objection is the possibility that
non-perturbative corrections remove the pole predicted by the
perturbative sum. The flat corrections in Sections~\ref{sec:cutoff}
and~\ref{sec:collective} preserve the perturbative coefficients while
removing the pole from the function example and the collective
covariance. These examples illustrate the repair mechanism; they do
not establish it for the interacting elementary four-point function.
The perturbative pole alone does not rule out such a repair. The next
section asks how a larger theory could supply the interacting scalar
hierarchy.

\section{A conditional route to interacting scalar correlations}
\label{sec:parent}

The collective examples have no higher connected correlations.
To reach the conjectured interacting theory, we need a mechanism that
supplies those correlations together with positivity. One possibility
is to retain the scalar observables of a larger, already constructed
theory, which we call the \emph{parent}. Their correlations would
inherit its OS axioms as a whole hierarchy. Under the estimates below,
its higher connected scalar correlations would be nonzero and its
expansion would be precisely the quartic one.

We describe a heavy-particle route to that agreement. It requires an
actual parent family, agreement of its scalar perturbative coefficients
with the quartic coefficients as the heavy masses grow, and control of
the exact error after truncating the parent expansion. A conditional
theorem states sufficient estimates. The selected mass path makes
inverse-mass corrections flat while placing the heavy thresholds below
the formal scalar Landau scale. The gauge--Higgs--fermion candidate
below checks the compatibility of a one-loop ultraviolet trajectory
and a stable massive tree vacuum for $N=1$; the parent construction,
corrected trajectory and both estimates remain to be established.

\subsection{Retain a physical scalar and vary the supplied parameters}

Suppose the parent contains a local physical scalar multiplet $\Phi_a$
with exact $O(N)$ symmetry. Retain every correlation containing only
these fields. Reflection positivity follows because scalar observables
are a subset of the parent's observables; covariance, clustering and
the reconstruction growth bounds restrict in the same way. No
scalar-only action is needed, and no correlation is changed.

OS reconstruction gives the Hilbert space generated by scalar fields
acting on the vacuum; this space need not contain every parent state.
Heavy channels accessible to those scalar insertions can nevertheless
appear in its spectrum. At each fixed positive $g$, their finite
energy scales are compatible with the conventional effective-field-theory
reading: additional particles affect scalar responses above heavy
thresholds. The restricted correlations still define an exact continuum
hierarchy generated by the retained scalar field (or multiplet).
Under both estimates below, this hierarchy meets the same stated
axiomatic existence criterion of Section~\ref{sec:question}: the OS
axioms, exact physical normalization and every quartic coefficient.
That criterion specifies neither the complete spectrum nor a pure-scalar
microscopic action. A continuum limit of a specified pure-scalar lattice
measure is a different construction requirement.

In a gauge theory the word \emph{physical} matters. A gauge-fixed
charged Higgs propagator need not satisfy the positivity required for
this restriction. Our candidate retains a real gauge-singlet scalar;
its Higgs doublet supplies masses to the additional gauge particles.

In the $N=1$ candidate of Section~\ref{subsec:caf-parent}, the retained scalar $\phi$
couples to two heavy $SU(2)$ fundamental Dirac fermions $Q_\pm$.
Their Minkowski Yukawa interaction is
\[
 -iy\phi(\bar Q_+\gamma_5Q_+-\bar Q_-\gamma_5Q_-).
\]
The simultaneous transformation $\phi\mapsto-\phi$,
$Q_+\leftrightarrow Q_-$ preserves this interaction and, with equal
fermion masses, realizes the required $O(1)$ sign symmetry. These
gauge-charged fermions connect the scalar's quartic running to the
gauge-assisted flow through its Yukawa interaction.

Fix the subtraction scale $Q_0=\mu$ once and for all. The physical
quartic $g$ and the additional renormalized couplings $u_i$ supplied
there label a parent theory. Select a weak-coupling family by writing
\begin{equation}
 u_i(g)=g z_i(g),\qquad z_i(g)=\frac{u_i(g)}g .
 \label{eq:supplied-ratios}
\end{equation}
Thus $z_i$ are ratios of supplied couplings. Gauge and Yukawa
interactions are counted through their squares, so this notation
includes $e^2,y^2=O(g)$. Varying $g$ at fixed $Q_0$ selects different
theories. Each selected theory keeps its input labels fixed while
its energy-dependent coupling predictions vary with the probing
energy $Q$. Schematically, in a chosen renormalization scheme,
\begin{equation}
 \frac{d c(Q;g)}{d\log Q}=\beta(c(Q;g)),\qquad g\text{ held fixed}.
 \label{eq:parent-energy-running}
\end{equation}
Here $c$ is the vector of running predictions in that scheme. It is
distinct from the supplied ratios $z(g)$. In a physical subtraction
scheme the beta function can also depend on energy-to-mass ratios;
this dependence and matching across heavy-particle scales are
suppressed in the schematic equation.

We seek parents with \emph{complete asymptotic freedom} (CAF), also
called total asymptotic freedom: all their dimensionless interaction
couplings tend to zero as $Q\to\infty$ in each selected theory.
Section~\ref{subsec:caf-parent} explains why this property motivates a construction
programme. The supplied ratios must select a CAF family that the parent theory
actually possesses; not every choice of $z$ is allowed. Fixed ratios
would suffice if they select such a family. Otherwise $z(g)$ includes
the matching and scheme-conversion corrections needed to follow such
CAF trajectories with the prescribed physical normalization at $Q_0$.
It is not a freely tunable function for engineering coefficients.
We require these ratios to stay bounded in a fixed compact set,
without requiring every point of that set to be CAF. The conditional
implication below uses the parent's OS axioms, exact normalization
and two estimates; CAF is its motivation, not a separate premise
needed for that implication.

\subsection{Two estimates that would give the quartic expansion}

Let $T_n(g,z,M_H)$ be the parent's exact scalar correlations, with
$n$ field insertions and component indices suppressed. Their light
scalar pole mass $m>0$, unit pole residue and physical quartic value
$g$ satisfy exactly the conditions of Section~\ref{sec:question}.
All additional physical particles have masses comparable to a common
heavy scale $M_H$. Their positive mass ratios stay in a fixed compact
set along the chosen family; these prescribed ratios are suppressed
in the notation. $M_H$ is unrelated to the constituent mass $M$ in
Section~\ref{sec:collective}.

Choose a consistent normalized perturbative prescription for this
parent. At fixed ratios $z$ and fixed consistent mass data, denote
its formal scalar expansion by
\begin{equation}
 \widehat T_n(g;z,M_H)=\sum_{k=0}^{\infty}g^k b_{n,k}(z,M_H).
 \label{eq:parent-formal-series}
\end{equation}
Here $k$ is the perturbative order. Where a joint formal coupling
prescription is available with those mass data held fixed, the
finite total-degree grouping is explicitly
\begin{equation}
 \begin{aligned}
 \widehat T_n^{\rm joint}(g,u;M_H)
   &=\sum_{j\geq0}\sum_{\alpha\in\mathbb N_0^s}
       g^j u^\alpha B_{n;j,\alpha}(M_H),\\
 b_{n,k}(z,M_H)
   &=\sum_{\substack{j\geq0,\ \alpha\in\mathbb N_0^s\\j+|\alpha|=k}}
       z^\alpha B_{n;j,\alpha}(M_H),\qquad u_i=gz_i .
 \end{aligned}
 \label{eq:parent-total-degree}
\end{equation}
Here $j$ counts powers of the quartic coupling, while the multi-index
$\alpha$ lists the powers of the $s$ additional coupling variables.
Write $|\alpha|=\sum_i\alpha_i$ and
$u^\alpha=\prod_i u_i^{\alpha_i}$; the $B_{n;j,\alpha}$ are the
joint coefficient distributions. At each fixed $k$, this substitution
only collects finitely many formal terms; no infinite series in the
other couplings is being summed. In a broken Higgs parameterization,
masses and couplings
can obey relations that must be imposed consistently when defining
\eqref{eq:parent-formal-series}. The schematic joint grouping applies
where that prescription exists; it asserts no polynomial dependence
on arbitrary independently varied physical masses and ratios.

The chosen family subsequently evaluates $b_{n,k}$ at
$z(g),M_H(g)$. Boundedness of a general $z(g)$ does not give it a
Taylor or asymptotic expansion. The resulting terms therefore need
not themselves form an ordinary $g$-series with constant
coefficients. The theorem obtains the fixed quartic coefficients
$a_{n,k}$ directly from estimates on these evaluated coefficients
and the exact remainder, without expanding $z(g)$.

Choose a fixed reference energy $m_*>0$ and the mass path
\begin{equation}
 M_H(g)=m_*\exp[(\log(1/g))^2],\qquad
 \ell_H(g)=1+\log\frac{M_H(g)}{m_*},\qquad 0<g<1 .
 \label{eq:heavy-path}
\end{equation}
An inverse power of $M_H(g)$ times any fixed power of its logarithm
then becomes flat in $g$. At the same time,
$g\ell_H(g)^q\to0$ for every fixed finite $q\geq0$, which controls
logarithmic losses in the remainder. Every positive $g$ selects a
parent with a finite heavy mass; the limit changes theories rather
than taking an infinite-mass limit in one theory at fixed $g$.

The two estimates have different meanings. The first says that the
parent and quartic coefficient distributions become close after
matching the same physical scalar conditions. The second bounds the
actual error when the parent's exact correlation is approximated by
a finite sum of its own coefficients. Agreement of coefficients
alone cannot control this exact error. Both estimates must hold
along the same path, including its selected ratios and masses.

To express their control on tests, in fixed coordinate units use the
standard Schwartz seminorms
\[
 p_{a,b}(f)=\max_{|\gamma|\leq b}\sup_{X\in\mathbb R^{4n}}
             (1+|X|)^a|\partial^\gamma f(X)|,
 \qquad a,b\in\mathbb N_0.
\]
Here $X=(x_1,\ldots,x_n)$ collects the insertion coordinates and
$\gamma$ is a derivative multi-index in $4n$ dimensions. The indices
$a,b$ select finite decay-weight and derivative orders.
For an estimate at correlation order $n$ and perturbative or
truncation index $j$, write $p_{n,j}=p_{A_{n,j},B_{n,j}}$ for a
suitable choice of those orders. Thus $n,j$ select the estimate,
not the standard weight and derivative orders. Existence of some
finite $A_{n,j},B_{n,j}$ suffices; increasing them if necessary gives
one seminorm and one constant dominating both estimates at that
index. The choices and constants must be independent of $g$ along
the selected path. No bound is requested for independently varied
parent inputs. Each logarithmic exponent below is a fixed finite
nonnegative number for its particular correlation and perturbative
or truncation order, independent of $g$ and the varying masses.
No common bound over all orders is required.

\begin{theorem}[Conditional scalar construction]
\label{thm:parent}
Assume an actual parent family contains
$(g,z(g),M_H(g))$ for every sufficiently small $g>0$, has the
physical scalar and exact normalization just specified, and satisfies
the OS axioms, including the reconstruction growth condition.
Use the same prescribed coincident-point extensions in the parent
coefficient comparison and in quartic perturbation theory.
Assume the following estimates for every Schwartz test $f$ and
sufficiently small $g$ along this selected family:
\begin{enumerate}
\item For every fixed $n,k$, the coefficient difference satisfies
\begin{equation}
 |\langle b_{n,k}(z(g),M_H(g))-a_{n,k},f\rangle|
 \leq C_{n,k}p_{n,k}(f)
       \left(\frac{m_*}{M_H(g)}\right)^{\delta_{n,k}}
       \ell_H(g)^{d_{n,k}},\qquad \delta_{n,k}>0 .
 \label{eq:coefficient-decoupling}
\end{equation}
\item For every fixed $n,K$ with $K\geq1$, the exact remainder satisfies
\begin{equation}
 \begin{aligned}
 &\left|\left\langle T_n(g,z(g),M_H(g))
       -\sum_{k=0}^{K-1}g^k b_{n,k}(z(g),M_H(g)),f\right\rangle\right|\\
 &\hspace{2em}\leq C_{n,K}p_{n,K}(f)g^K\ell_H(g)^{r_{n,K}} .
 \end{aligned}
 \label{eq:parent-remainder}
\end{equation}
\end{enumerate}
The constants and seminorm orders are independent of $g$ along the
path, and $d_{n,k},r_{n,K}$ are fixed finite nonnegative exponents.
Then the unchanged scalar hierarchy
\begin{equation}
 S_n(g)=T_n(g,z(g),M_H(g))
 \label{eq:scalar-restriction}
\end{equation}
satisfies Conjecture~\ref{conj:existence}: it obeys the OS axioms and
exact normalization, has the complete quartic expansion
\eqref{eq:matching}, and is interacting for sufficiently small $g>0$.
\end{theorem}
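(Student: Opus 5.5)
The argument splits into three parts: the axiomatic properties, the coefficient matching, and the interaction claim.

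The OS axioms and the exact normalization need no new estimate. The hierarchy \eqref{eq:scalar-restriction} is the restriction of the parent's Schwinger functions to the retained physical scalar. Reflection positivity, Euclidean covariance, $O(N)$ covariance, clustering and the reconstruction growth bound therefore hold because the scalar observables form a subset of the parent's observables, as explained in the subsection on retaining a physical scalar. The normalization conditions (pole mass $m$, unit residue, quartic value $g$ at the reference configuration) are assumed exactly for the parent family along the path, so they transfer verbatim.

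The substance is \eqref{eq:matching}. Fix $n$, $K\geq1$ and a Schwartz test $f$, and write
\[
 S_n-\sum_{k<K}g^ka_{n,k}
 =\Bigl(T_n-\sum_{k<K}g^kb_{n,k}\Bigr)
  +\sum_{k<K}g^k\bigl(b_{n,k}-a_{n,k}\bigr),
\]
with every $b_{n,k}$ evaluated at $(z(g),M_H(g))$. The coincident-point extensions are the same on both sides by hypothesis, so this identity holds as distributions on all of $\Sch(\mathbb R^{4n})$.

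For the first bracket, \eqref{eq:parent-remainder} gives the bound $C_{n,K}p_{n,K}(f)\,g^K\ell_H(g)^{r_{n,K}}$. This is only $o(g^{K-1})$, not $O(g^K)$, because $\ell_H=1+(\log(1/g))^2$ grows. The fix is to apply the remainder estimate at truncation index $K+1$ instead of $K$. That bound is $O(g^{K+1}\ell_H^{r_{n,K+1}})=O(g^K)$, since $g\ell_H(g)^q\to0$ for every fixed $q$, as noted after \eqref{eq:heavy-path}. The extra term $g^Kb_{n,K}(z(g),M_H(g))$ is then absorbed by writing $b_{n,K}=a_{n,K}+(b_{n,K}-a_{n,K})$. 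The $a_{n,K}$ part is a fixed tempered distribution, so it contributes $O_f(g^K)$. The difference part is controlled by \eqref{eq:coefficient-decoupling}.

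For each coefficient difference with $k\leq K$, \eqref{eq:coefficient-decoupling} gives
\[
 C_{n,k}\,p_{n,k}(f)\,(m_*/M_H)^{\delta_{n,k}}\ell_H^{d_{n,k}}
 =C\,p(f)\exp\bigl[-\delta_{n,k}(\log(1/g))^2\bigr]\bigl(1+(\log(1/g))^2\bigr)^{d_{n,k}}.
\]
This is flat, since $\exp[-\delta(\log 1/g)^2]=g^{\delta\log(1/g)}$ beats every power of $g$, and the polynomial logarithmic factor does not change that. Because only finitely many $k\leq K$ appear, the sum is $O_{n,K,f}(g^K)$. The constants are uniform in $g$ along the path, and the seminorms can be taken as the maximum over the finitely many indices involved. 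This proves \eqref{eq:matching}, including the contact terms, because the comparison is made on arbitrary Schwartz tests.

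It remains to show the family is interacting. Apply \eqref{eq:matching} to the connected four-point function. The connected part is a finite polynomial combination of the $S_j$ with $j\leq4$, so its expansion is the corresponding combination of the $a_{j,k}$. Choose a test $f$ supported away from coincident points on which the order-$g$ connected quartic coefficient is nonzero; such an $f$ exists because the tree vertex is a nonzero separated-point distribution. With $K=2$, the smeared connected function is $g\,c_1[f]+O(g^2)$ with $c_1[f]\neq0$, so it is nonzero for sufficiently small $g$. The hierarchy is therefore non-Gaussian.

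The main obstacle I expect is the logarithmic loss in \eqref{eq:parent-remainder}. The step that needs care is the index shift $K\to K+1$, together with checking that $b_{n,K}$ is uniformly tempered along the path. The decoupling estimate supplies exactly that uniform control, so the shift should go through.
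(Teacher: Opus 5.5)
Your proposal is correct and follows the paper's proof essentially step for step. It uses the same split with the remainder estimate applied at truncation index $K+1$, the extra term $g^K b_{n,K}$ decomposed into $g^K a_{n,K}$ plus a flat difference, flatness of $e^{-\delta(\log(1/g))^2}$ times fixed logarithmic powers, and scalar restriction for the OS axioms and exact normalization. Your interaction argument is slightly more explicit than the paper's one-line remark. It goes through because the $g$-independent seminorm bounds make the matching uniform on bounded test sets, and that uniformity is what the products of lower-order functions in the connected four-point function require.
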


\begin{proof}
Fix $n,K$ and a Schwartz test $f$, and put $t=\log(1/g)$.
The coefficient bound is a fixed test seminorm times
\begin{equation}
 e^{-\delta t^2}(1+t^2)^d=o(g^J)
 \quad\text{for every fixed }J\geq0.
 \label{eq:flat-heavy-error}
\end{equation}
For this proof abbreviate
$T_n^g=T_n(g,z(g),M_H(g))$ and
$b_{n,k}^g=b_{n,k}(z(g),M_H(g))$. We retain one extra perturbative
term so that the extra power of $g$ absorbs the remainder's logarithmic
loss. Split the target error exactly as
\[
 \begin{aligned}
 T_n^g-\sum_{k=0}^{K-1}g^k a_{n,k}
 &=\left[T_n^g-\sum_{k=0}^{K}g^k b_{n,k}^g\right]\\
 &\quad+\sum_{k=0}^{K}g^k(b_{n,k}^g-a_{n,k})+g^K a_{n,K}.
 \end{aligned}
\]
After pairing with $f$, apply \eqref{eq:parent-remainder} at
truncation index $K+1$ to the first bracket. It is
$O(g^{K+1}(1+t^2)^{r_{n,K+1}})=o(g^K)$.
Every term of the finite coefficient-difference sum is $o(g^K)$
by \eqref{eq:flat-heavy-error}. The last term is $O(g^K)$, since
$a_{n,K}$ is a fixed tempered distribution. This proves
\eqref{eq:matching}. The OS properties and exact normalization
hold by scalar restriction at each $g>0$; the nonzero leading
connected four-point coefficient ensures interaction. The same
argument is uniform on bounded sets of Schwartz tests at each fixed
correlation and truncation order.
\end{proof}

Once the path is selected, $z(g)$ and the heavy masses are prescribed
functions of the sole variable label $g$ of $S_n(g)$, as in
Section~\ref{sec:question}. Scalar-accessible heavy thresholds---the
timelike energies where channels involving heavy states open---can
give continuous spectral weight or branch cuts at positive $g$.
These physical contributions differ from the local heavy term absorbed
in the quartic subtraction in Section~\ref{subsec:parent-checks};
the fixed-test expansion permits them. The exact reference value
$g=\bar g(\mu;g)$ is assumed in the theorem. Point-response existence
and regularity at other scales require the further checks of
Section~\ref{sec:poles}.

\subsection{What must be checked in the parent?}
\label{subsec:parent-checks}

The two estimates address different issues: agreement of perturbative
coefficients, and control of the error after truncating the parent's
perturbative expansion.
Parent existence is assumed in the theorem; both estimates still need
to be established for the proposed parent.

The coefficient estimate \eqref{eq:coefficient-decoupling} compares the
parent scalar coefficients $b_{n,k}$ with the quartic coefficients
$a_{n,k}$ after imposing the \emph{same} physical scalar mass, unit
pole residue and four-point subtraction value. This matching is essential.
Heavy particles can produce large mass corrections, of size $gM_H^2$,
as well as shifts in field normalization and the quartic coupling.
If these conditions can be met within the allowed parent family,
its mass and interaction parameters and the scalar field normalization
must be adjusted. This keeps the physical pole mass, pole residue and
quartic subtraction value at their prescribed values.
Decoupling concerns the remaining difference,
after the light mass and interaction strength have been fixed.

A heavy fermion box illustrates why this is plausible. Its local
four-scalar contribution is absorbed by matching the quartic coupling
at the prescribed subtraction point. With external momenta, the light
mass $m$ and subtraction scale $\mu$ held fixed as $M_H$ grows,
the remaining momentum dependence is expected to be suppressed by
inverse powers of $M_H$.
The $O(N)$ symmetry includes $\Phi_a\mapsto-\Phi_a$, which forbids
odd scalar terms. After matching the mass, kinetic and quartic terms,
the first expected residual scalar operators have dimension six.
In one-component notation, examples are $\phi^6/M_H^2$ and
$\phi^2(\partial\phi)^2/M_H^2$.
This local expansion motivates \eqref{eq:coefficient-decoupling},
in the low-momentum decoupling setting of
\citet{AppelquistCarazzone1975}, where heavy effects remain in the
renormalization of light parameters.
Proving it for every Schwartz probe additionally requires control at
large external momenta, mixed heavy/light graphs, contact extensions,
and conversion to the physical pole mass and residue. That decoupling
motivation also supplies no bound on the exact growing-mass remainder.

Where the joint prescription \eqref{eq:parent-total-degree} applies,
substituting $u_i=gz_i$ makes all additional interaction couplings
vanish with $g$. At fixed consistent ratios and mass data,
$b_{n,k}(z,M_H)$ then collects the terms of total coupling degree $k$.
Otherwise it denotes the coefficient of order $k$ in the fixed-ratio
prescription \eqref{eq:parent-formal-series}. The chosen family
subsequently evaluates these coefficients at $z(g),M_H(g)$.

The remainder estimate \eqref{eq:parent-remainder} instead concerns
the actual scalar distribution $T_n(g,z,M_H)$ of the parent, including
its non-perturbative contributions. For truncation after degree $K-1$,
define the remaining distribution divided by $g^K$:
\begin{equation}
 F_{n,K}(g,z,M_H):=g^{-K}
 \left[T_n(g,z,M_H)-\sum_{k=0}^{K-1}g^k b_{n,k}(z,M_H)\right].
 \label{eq:normalized-parent-tail}
\end{equation}
If a parent family exists at fixed allowed $z,M_H$ and has an all-order
asymptotic expansion, each fixed smearing satisfies
$\langle F_{n,K}(g,z,M_H),f\rangle\to
\langle b_{n,K}(z,M_H),f\rangle$ as $g\downarrow0$.
This does not control how the remainder grows when $M_H$ also tends
to infinity. On the \emph{same} chosen family as the coefficient
estimate, the theorem requires
\[
 |\langle F_{n,K}(g,z(g),M_H(g)),f\rangle|
 \leq C p(f)\ell_H(g)^r,
 \qquad \ell_H(g)=1+\log\frac{M_H(g)}{m_*}.
\]
Here $C$ and the continuous Schwartz seminorm $p$ are independent of
$g$ along the selected family; the finite exponent $r$ may depend on
$n,K$. This is the needed control on the actual error after truncation.

A stronger sufficient condition concerns only the selected family.
For fixed $n,K$, choose a small $g_0>0$, independent of $f$.
Suppose for every fixed Schwartz test $f$, the function
$g\mapsto\langle F_{n,K}(g,z(g),M_H(g)),f\rangle$ extends continuously
to $[0,g_0]$. Each such function is then bounded. Uniform boundedness
for tempered distributions supplies one continuous Schwartz seminorm
$p$ and a constant $C$ independent of $g$, giving
\eqref{eq:parent-remainder} with $r=0$.
The theorem permits finite logarithmic growth along the selected path.
Even if Borel summation recovers the exact parent correlations at each
fixed mass, it does not by itself give the required bound along the
path of increasing masses. The bound could instead be proved directly
for the exact correlations.

\subsection{Why a completely asymptotically free parent?}
\label{subsec:caf-parent}

Complete asymptotic freedom gives a reason to investigate this route:
all dimensionless interaction couplings approach zero as the probing
energy tends to infinity. Weak interactions at short distances offer
a starting point for renormalization-group estimates. If the parent
also has a stable massive quantum phase, decay of correlations at
large distances could support cluster expansions, which control how
distant regions contribute to correlations. Together these features
suggest control of both the ultraviolet and infrared regimes, with
no expected ultraviolet Landau growth on the selected trajectory.
They motivate a construction programme, rather than prove existence.
Ba\l{}aban's gauge-theory work supplies relevant ultraviolet methods
\cite{Balaban1987}; even the full four-dimensional Yang--Mills
existence and mass-gap problem remains open \cite{JaffeWitten2000}.

For this mechanism to help, the additional particles must start
affecting the running while the scalar interaction is still weak.
Their masses must also grow fast enough as $g\downarrow0$ to make
inverse-mass corrections flat after fixed Schwartz smearing.
We compare these requirements using the mass path
\eqref{eq:heavy-path}. Here $Q_0$ and $m_*$ are fixed reference
energies. In each selected theory, $g$ is the fixed input coupling
at $Q_0$, while $Q$ varies to probe its energy dependence. Taking
$g\downarrow0$ instead selects different theories, with different
heavy masses $M_H(g)$.

For this scale comparison, use the ultraviolet form of the scalar
one-loop beta function, omitting the light-mass threshold dependence.
Let $b_\phi>0$ be its coefficient for the scale variable $\log Q$,
and let $\bar g_\phi^{(1)}(Q;g)$ denote the solution with reference
value $g$ at $Q_0$. This auxiliary flow gives
\[
 \frac{d\bar g_\phi^{(1)}}{d\log Q}
       =b_\phi(\bar g_\phi^{(1)})^2,
 \qquad
 \bar g_\phi^{(1)}(Q_0;g)=g,
 \qquad
 \bar g_\phi^{(1)}(Q;g)
       =\frac{g}{1-b_\phi g\log(Q/Q_0)}.
\]
Its formal pole is
$\Lambda_L^{\rm formal}(g)=Q_0\exp[1/(b_\phi g)]$.
This formula extrapolates the scalar one-loop equation without
additional particles; it asserts no exact singularity of the parent.

The chosen heavy masses grow much more slowly than this comparison
scale, although their inverse powers are flat in $g$. To verify the
separation, put $L=\log(1/g)$, so $g=e^{-L}$, and let
$c=\log(m_*/Q_0)$, a fixed constant. Then
\begin{equation}
 \begin{aligned}
 \log\frac{M_H(g)}{m_*}&=L^2=o(1/g),\\
 \log\frac{\Lambda_L^{\rm formal}(g)}{Q_0}
       &=\frac1{b_\phi g},\\
 g\log\frac{M_H(g)}{Q_0}
       &=e^{-L}(L^2+c)\longrightarrow0.
 \end{aligned}
 \label{eq:parent-threshold-scales}
\end{equation}
The first and third statements follow because an exponential beats
any fixed power of $L$. In fact,
$\log(M_H/\Lambda_L^{\rm formal})=c+L^2-e^L/b_\phi\to-\infty$,
so $M_H/\Lambda_L^{\rm formal}\to0$. The additional particles
therefore enter far below the scale where the scalar comparison
would diverge.

The same calculation bounds the scalar one-loop coupling throughout
the interval below the heavy scale. For small enough $g$,
$M_H(g)>Q_0$ and
$\epsilon(g):=b_\phi g\log(M_H(g)/Q_0)\to0$. Once
$\epsilon(g)\leq1/2$,
\[
 g\leq\bar g_\phi^{(1)}(Q;g)
       \leq\frac{g}{1-\epsilon(g)}\leq2g,
 \qquad Q_0\leq Q\leq M_H(g).
\]
At the upper end, $\bar g_\phi^{(1)}(M_H(g);g)/g\to1$.
These are exact inequalities for the displayed one-loop comparison.
An actual parent estimate requires control of matching corrections
and of the exact correlations; the calculation does not establish
\eqref{eq:parent-remainder}.

Above the heavy scale, gauge, Yukawa and additional scalar
interactions change the coupled beta functions and can permit all
interactions to decrease together \cite{GiudiceEtAl2015}.
Early coupled gauge/scalar analyses found restrictions on simultaneous
asymptotic freedom and complete Higgs breaking in the representation
classes examined \cite{ChengEichtenLi1974}.
CAF describes this energy evolution as $Q\to\infty$ in each fixed
theory. The supplied ratios $z_i(g)=u_i(g)/g$ instead select the
family at fixed $Q_0$. They must come from matching an allowed CAF
trajectory to the physical normalization, including the required
corrections; the scalar comparison does not choose them.

To make these questions model-specific, consider a candidate for
$N=1$ with an $SU(2)$ gauge field, the retained real gauge-singlet
scalar $\phi$, a Higgs doublet $H$, ten fundamental Dirac fermions and
one Dirac singlet. Write $e(Q)$ for its running gauge coupling.
Above the heavy thresholds its one-loop gauge equation is
\[
 16\pi^2\frac{d e^2(Q)}{d\log Q}=-e^4(Q).
\]
The coupled one-loop equations admit a CAF ray: an energy trajectory
on which the nonzero Yukawa squares and scalar quartics stay in
fixed positive ratios to $e^2(Q)$. The mixed scalar coupling
$\delta$ for the $\phi^2H^\dagger H$ term is zero on this ray.
Since $e^2(Q)\to0$ as $Q\to\infty$, all interactions tend to zero
together. These are running Lagrangian couplings at $Q$; $z_i(g)$
instead uses the physical input labels at $Q_0$.

Selecting this ray requires tuning: small deviations in the running
ratios grow as $Q$ increases. The supplied couplings must therefore
be chosen to follow a CAF trajectory available in the parent theory,
with corrections included when matching it to the physical
normalization at $Q_0$.
Nearby arbitrary choices need not be CAF. Ultraviolet attraction
would make small ratio deviations shrink as energy increases;
this stronger property is studied in \citet{HoldomRenZhang2015}.

The candidate also allows a massive tree spectrum. At a stable
tree Higgs vacuum, the mass parameters can be chosen so that the
vectors and radial Higgs (the massive fluctuation of the Higgs magnitude)
have mass $M_H$, the additional fermions
have masses between $M_H$ and $\sqrt5M_H$, and the retained scalar
has the fixed mass $m>0$. This checks the classical mass pattern;
it does not prove a stable quantum phase or its physical mass gap.
Appendix~\ref{app:parent} gives the Lagrangian, full one-loop equations,
ratio analysis and tree-mass calculation.

Turning these checks into the parent required by
Theorem~\ref{thm:parent} remains the task. One must find the corrected
CAF trajectory beyond one loop, construct the normalized continuum
family, and prove a stable quantum phase with a physical mass gap.
One must then establish both the coefficient comparison
\eqref{eq:coefficient-decoupling} and the exact remainder estimate
\eqref{eq:parent-remainder} along the same growing-mass path.
Neither follows from the one-loop ray or tree masses. If the theorem's
premises are met, its scalar hierarchy is non-Gaussian and therefore
cannot be a limit in a class whose limits are all Gaussian, as in
Section~\ref{sec:cutoff}. Gauge and fermion field content and a one-loop
ray alone do not establish that continuum conclusion. The candidate
makes these questions concrete for one scalar; an $O(N)$ parent with
$N>1$ requires its own field-content and flow analysis.

\section{The conjecture and the next construction tests}
\label{sec:next}

The usual lattice, Borel and perturbative Landau-pole arguments do not,
by themselves, exclude Conjecture~\ref{conj:existence}. It asks for a
massive, physically normalized OS family with every finite-$N$ quartic
coefficient. The results here identify mechanisms available to a
construction and make their next tests specific; the interacting
family remains to be constructed.

The function example separates vanishing limits along every permitted
bare trajectory from a nonzero completion obtained from renormalized
coefficients. A flat correction also gives it regular exact running.
The collective calculation recovers distributions after ordinary Borel
summation fails and constructs two positive spectral completions with
the same massless leading-$1/N$ collective coefficients, in the stated
contact convention. These are generalized free fields: their connected
correlations above order two vanish. They illustrate completion and
positivity, but do not construct the interacting target.

The parent candidate gives a concrete programme for the interacting
construction. A first test is to calculate its scalar coefficients at
low perturbative order and compare them with the quartic coefficients,
after matching the light pole mass, unit pole residue and exact quartic
subtraction value. This model-specific calculation remains to be done.
The next task is the full coefficient estimate
\eqref{eq:coefficient-decoupling}, covering every Schwartz test,
including large external momenta and the chosen contact extensions.

An actual normalized continuum parent must
also supply a family on a corrected CAF trajectory compatible with
the prescribed physical normalization, a stable massive quantum phase,
and the exact remainder estimate
\eqref{eq:parent-remainder}. Both estimates
must hold along the same selected ratios $z(g)$ and growing masses
$M_H(g)$: control at each fixed mass does not bound the error along
this path. The checked one-loop CAF ray and tree spectrum identify a
candidate for these tasks. Theorem~\ref{thm:parent} states why an actual
parent with the two estimates would make its unchanged scalar
correlations realize the conjecture for $N=1$.

We prefer a non-perturbative completion with a regular physical
response. Establishing point responses at other scales requires the
additional checks of Section~\ref{sec:poles}; this regularity is separate
from Theorem~\ref{thm:parent}'s distributional conclusion.

\section*{Acknowledgement}
The author thanks Arkady Tseytlin, his supervisor for this article,
for guidance, comments and encouragement to prepare it.
This work develops the author's dissertation,
\emph{$\phi^4$ triviality, Landau pole and beyond-Borel summability:
an axiomatic QFT point of view} (Imperial College London, 2026).
AI systems assisted research, calculations, drafting and review under
the author's direction. The Borel--Fourier method for arbitrary
Schwartz tests was found with AI while exploring accelerated summation.
The positive spectral realization of the repaired logarithmic flow
originated in AI-assisted review; AI-assisted calculations supplied
its representation and estimates. The author is responsible for the content.

\appendix

\setcounter{section}{0}
\section{Why does the summation work on every Schwartz test?}
\label{app:distributional}

Can one sum the logarithmic coefficients by one continuous rule for
all Schwartz probes, including those nonzero at $w=0$ or reaching the
pole? The positive-ray Borel integral can fail even though the
geometric kernel is known. Here the two outward imaginary rays give
well-defined distributions, and their mean gives its principal value.
The estimates below prove Theorem~\ref{thm:bf}.

Fix $\mu>0$, $A=8\pi^2$ and $w=P^2\geq0$. A half-line Schwartz
function is smooth up to zero and has finite seminorms
$p_{m,n}(f)=\sup_{w\geq0}(1+w)^m|f^{(n)}(w)|$ for all $m,n$,
in fixed momentum units. The series is
\begin{equation}
 \widehat W_f(g)=-A\sum_{j\geq0}c_j[f]g^{j+1},\qquad
 c_j[f]=\int_0^\infty[\log(w/\mu^2)]^j f(w)\,dw.
 \label{eq:app-a-series}
\end{equation}
The coefficient integrals exist at both endpoints. No assumption
$f(0)=0$ is imposed.

\paragraph{The logarithmic profile and its Borel transform.}
Include the Jacobian $dw/dt$ in the profile and fix the Fourier convention:
\begin{equation}
 t=\log(w/\mu^2),\qquad h_f(t)=\mu^2e^t f(\mu^2e^t),\qquad
 \widehat h_f(s)=\int_{\mathbb R}h_f(t)e^{-ist}\,dt.
 \label{eq:app-a-profile}
\end{equation}
Thus $c_j[f]=\int t^j h_f(t)\,dt$. Every $t$ derivative is a finite
sum of $w^{j+1}f^{(j)}(w)$: these are $O(e^t)$ as $t\to-\infty$
even when $f(0)\ne0$, and Schwartz decay makes them decrease faster
than any fixed exponential as $t\to+\infty$. This also bounds each
profile seminorm by finitely many probe seminorms, so $f\mapsto h_f$
is continuous into $\Sch(\mathbb R)$. The useful low derivatives are
\begin{equation}
 h_f'=wf+w^2f',\qquad h_f''=wf+3w^2f'+w^3f'',\qquad w=\mu^2e^t.
 \label{eq:app-a-profile-derivatives}
\end{equation}
For $0<\delta<1$, the weight in
$M_\delta(f):=\int e^{\delta|t|}|h_f(t)|\,dt$ behaves as
$w^{-\delta}$ near zero and $w^\delta$ at infinity. It is integrable
without endpoint vanishing, and
\begin{equation}
 |c_j[f]|\leq j!\delta^{-j}M_\delta(f),\qquad
 M_\delta(f)\leq C_{\delta,\mu}p_{3,0}(f).
 \label{eq:app-a-moment-bound}
\end{equation}
Dominated convergence identifies the genuine local Borel series with
\begin{equation}
 B_f(\xi):=\mathcal B[\widehat W_f/g](\xi)
 =-A\sum_{j\geq0}\frac{c_j[f]}{j!}\xi^j
 =-A\int_{\mathbb R}h_f(t)e^{\xi t}\,dt,\qquad |\xi|<1.
 \label{eq:app-a-borel-identity}
\end{equation}
The integral continues it holomorphically to $\operatorname{Re}\xi>-1$:
the negative endpoint has decay $e^{(1+\operatorname{Re}\xi)t}$,
and Schwartz decay controls the positive endpoint and all required
$\xi$ derivatives. In particular,
$B_f(-is)=-A\widehat h_f(s)$ and $B_f(is)=-A\widehat h_f(-s)$.

\paragraph{Abel evaluation fixes the two signs.}
Put $r=1/g$ and $a=\mu^2e^r$, with $g>0$. The outward rays
$\xi=-is$ and $\xi=is$ have $d\xi=-i\,ds$ and $i\,ds$, giving
\begin{equation}
 I_\downarrow=iA\int_0^\infty e^{irs}\widehat h_f(s)\,ds,
 \qquad I_\uparrow=-iA\int_0^\infty e^{-irs}\widehat h_f(-s)\,ds.
 \label{eq:app-a-rays}
\end{equation}
These integrals converge absolutely by Fourier decay. Insert
$e^{-\epsilon s}$, $\epsilon>0$, to justify Fubini; then
\begin{equation}
 I_{\downarrow,\epsilon}=A\int_{\mathbb R}
       \frac{h_f(t)}{t-r-i\epsilon}\,dt,
 \qquad I_{\uparrow,\epsilon}=A\int_{\mathbb R}
       \frac{h_f(t)}{t-r+i\epsilon}\,dt.
 \label{eq:app-a-abel}
\end{equation}
For the Cauchy transform $C_f(z)=A\int h_f(t)/(t-z)\,dt$, $z\notin\mathbb R$,
these approach $C_f(r+i0)$ and $C_f(r-i0)$, respectively. The
identity $(x\mp i0)^{-1}=\PV(1/x)\pm i\pi\delta(x)$ follows by
subtracting the probe's value at zero in the real part and using
$\epsilon/(x^2+\epsilon^2)$ as a delta approximation of mass $\pi$.
Since $h_f(r)=af(a)$, it gives
\begin{equation}
 I_\downarrow=\langle W^0,f\rangle+i\pi Aa f(a),\qquad
 I_\uparrow=\langle W^0,f\rangle-i\pi Aa f(a),\qquad
 W^0=A\PV\frac1{\log(w/a)}.
 \label{eq:app-a-boundaries}
\end{equation}
The principal value in $w$ agrees with the one in $t$. Equal exclusions
$\eta$ about $t=r$ give distances $a(1-e^{-\eta})$ and
$a(e^\eta-1)$ from $w=a$, whose ratio is $e^\eta\to1$.
The local singular part is $Aa f(a)/(w-a)$; the possible difference
is its coefficient times the logarithm of that ratio, tending to
zero. There is no additional term supported at the pole.
The arithmetic mean therefore equals $\langle W^0,f\rangle$ and
is real for real probes. The difference is $2\pi iAa f(a)$.
Because $|af(a)|\leq p_{J,0}(f)a(1+a)^{-J}$, choosing $J>L+1$
makes this ambiguity $O(e^{-L/g})$ for every fixed $L>0$,
uniformly on bounded Schwartz sets. The decay-weight order $J$ is
chosen for the requested rate $L$; bounded sets of the full Schwartz
space bound each such seminorm.

\paragraph{Continuity and the finite-order error.}
Two integrations by parts in the Fourier transform, splitting
$|s|\leq1$ from $|s|>1$, give
\begin{equation}
 \|\widehat H\|_1\leq2\|H\|_1+2\|H''\|_1,
 \qquad H\in\Sch(\mathbb R).
 \label{eq:app-a-fourier-lone}
\end{equation}
Set $q_4(f)=\max_{0\leq j\leq2}p_{4,j}(f)$, as in
Theorem~\ref{thm:bf}. With $dt=dw/w$,
\eqref{eq:app-a-profile-derivatives} bounds the right side for $H=h_f$
by $C_\mu q_4(f)$. Thus $|I(f,g)|\leq C_\mu A q_4(f)$ for either
ray and their mean: each is a continuous distribution on the full
half-line space.

For each $K\geq1$, $K+1$ integrations by parts in
\eqref{eq:app-a-rays} use
$\widehat h_f^{(j)}(0)=(-i)^j c_j[f]$; all boundary terms at infinity
vanish. Keeping the first $K$ coefficients and placing the next one
in the error yields, for either ray and their mean,
\begin{equation}
 \left|I(f,g)+A\sum_{j=0}^{K-1}c_j[f]g^{j+1}\right|
 \leq A g^{K+1}\left(|c_K[f]|+\|\widehat h_f^{(K+1)}\|_1\right)
 \leq C_{K,\mu}A g^{K+1}q_4(f).
 \label{eq:app-a-remainder}
\end{equation}
High Fourier derivatives come from multiplication by $t$:
$\widehat h_f^{(j)}=\widehat{(-it)^j h_f}$. Applying
\eqref{eq:app-a-fourier-lone} to $H=(-it)^j h_f$ requires only
$h_f,h_f',h_f''$. In $w$ variables their integrals contain logarithmic
powers multiplying $f,wf'$ or $w^2f''$, all controlled by $q_4(f)$:
for each fixed nonnegative integer $m$,
\[
 \int_0^\infty(1+|\log(w/\mu^2)|)^m
       \frac{1+w+w^2}{(1+w)^4}\,dw<\infty.
\]
Logarithmic powers are integrable at zero, and the tail is bounded by
a constant times $(1+\log w)^m/w^2$. The constant may depend on
$K$, but the weight order remains four and the derivative order two.
This fixed seminorm controls each finite-order error; arbitrarily fast
exponential estimates for the ray difference use the weight chosen above.
This proves the complete expansion uniformly on bounded probe sets;
it gives no uniform bound as packet widths shrink and their derivatives
diverge.

\paragraph{Four-dimensional tests and the scope of the prescription.}
For an arbitrary $\varphi\in\Sch(\mathbb R^4)$, use
\begin{equation}
 (\mathcal R\varphi)(w)=\frac w2\int_{S^3}
       \varphi(\sqrt w\,\omega)\,d\omega,
 \qquad \langle W(P^2),\varphi\rangle=\langle W,\mathcal R\varphi\rangle,
 \label{eq:app-a-radial}
\end{equation}
with unnormalised sphere measure. The factor $w/2$ is the radial
Jacobian. Angular averaging is smooth and even in $\sqrt w$; its
even Taylor expansions give smoothness at $w=0$, and Schwartz
bounds give continuity into the half-line space. Its image vanishes
at zero, while the preceding proof covers the larger half-line class.
The bounds therefore transfer to every momentum test and, by the
continuous inverse Fourier map with convention
$(2\pi)^{-4}\int e^{iP\cdot x}W(P^2)\,d^4P$, to position space.

\paragraph{Relation to other summation prescriptions.}
The two exact imaginary rays use the original Borel transform
without a contour rotation. A Stokes direction is a Borel ray whose
singularities require choosing a continuation above or below it.
The lateral operators $\mathcal S_{0+}$ and $\mathcal S_{0-}$ sum
formal data on those two sides; the Stokes operator $\mathcal G_0$
records the change between them through
$\mathcal S_{0-}=\mathcal S_{0+}\mathcal G_0$. Where these resurgent
operations exist, the general median is
\[
 \mathcal S_{0,\mathrm{med}}=\mathcal S_{0+}\mathcal G_0^{1/2}
                         =\mathcal S_{0-}\mathcal G_0^{-1/2}.
\]
The half-power uses this change between lateral sums; the median
generally differs from an arithmetic mean \cite{Dorigoni2019}.
Acceleration changes the
transform and recovery kernel to control growth that defeats ordinary
Laplace recovery \cite{BellonClavier2019}. Our continuous linear rule
is defined at positive real $g$; we have not proved analytic summability
in an open angular sector of the complex $g$ plane \cite{Sauzin2014Summability}.
Averaging Wick products formed from the two ray values need not give
Wick products of their mean; Wick correlations are formed after
choosing the two-point kernel. Further flat terms can share the
same expansion, so no unique completion is selected.

\setcounter{section}{1}
\section{Positive spectra and preservation of the collective expansion}
\label{app:positive}

This appendix gives the spectral, flatness and exact reference checks
for Proposition~\ref{prop:completions}. Positive spectral covariances
$C_D,C_R$ remove the Euclidean pole; the extensions $W_j=C_j-A$ preserve
the massless leading-$1/N$ collective coefficients after contact
subtraction. We compare the two families at one exact reference value,
which still leaves their spectra different.

Use $w=P^2\geq0$ for Euclidean squared momentum, $s\geq0$ for mass
squared, $A=8\pi^2$, a fixed momentum scale $\mu>0$, and $0<g<1$:
\begin{equation}
 a=\mu^2e^{1/g},\qquad b=a/e-\mu^2>0,\qquad
 W^0(w)=A\PV\frac1{\log(w/a)}.
 \label{eq:b-scales}
\end{equation}
Here $e$ is Euler's number; $a,b,w,s$ have units of mass squared.
The PV excludes equal shrinking distances about $w=a$.
For half-line Schwartz tests, smooth up to zero and rapidly decreasing
with every derivative, the prescribed expansion is
$\langle W^0,f\rangle\sim-A\sum_{k\geq0}c_k[f]g^{k+1}$, where
$c_k[f]=\int_0^\infty[\log(w/\mu^2)]^k f(w)\,dw$.
Its finite-order remainder is bounded by a continuous Schwartz
seminorm, as stated in \eqref{eq:ray-remainder}; $f(0)$ need not vanish.

\subsection{The two positive spectral measures}

Write $z$ for the complex continuation of $w$. In the convergent
spectral representation used here,
\begin{equation}
 C(z)=\int_0^\infty\frac{\rho(ds)}{z+s},\qquad \rho\geq0,
 \qquad \int_0^\infty\frac{\rho(ds)}{\mu^2+s}<\infty.
 \label{eq:b-stieltjes}
\end{equation}
The factors $1/(z+s)$ are the analytic continuations of free Euclidean
propagators of mass $\sqrt s$.
Nonnegative weights give positive state norms. Integrating the
positive-frequency free two-point functions over the same spectral
measure, with $s\geq0$, gives a Minkowski two-point function whose Fourier transform is supported
in the forward energy-momentum cone.
The Euclidean covariance $C$ is reflection positive and Wick pairings give a
generalized free field \cite{Greenberg1961,StreaterWightman2000}.
For the upper cut boundary, the free propagator $1/(z+s)$ has imaginary part
$-\pi\delta$. Hence the continuous density is
$\rho(s)=-\operatorname{Im}C(-s+i0)/\pi$.
A pole at $z=-s_*$ with positive residue $Z_*$ gives the atom
$Z_*\delta(s-s_*)\,ds$. The density is dimensionless here, while
$Z_*$ has units of mass squared.

For the dispersive choice, start with
$F(z)=A/\operatorname{Log}(z/a)$ on the principal branch. Its cut is
$(-\infty,0]$, and
\begin{equation}
 F(-s+i0)=\frac{A}{\log(s/a)+i\pi},\qquad
 \rho_a(s)=\frac{A}{\log^2(s/a)+\pi^2}>0.
 \label{eq:b-d-cut}
\end{equation}
The unwanted pole at $z=a$ has residue $Aa$, since
$\operatorname{Log}(z/a)=(z-a)/a+O((z-a)^2)$. Subtract it:
\begin{equation}
 C_D(z)=F(z)-\frac{Aa}{z-a},\qquad
 C_D(w)=\int_0^\infty\frac{\rho_a(s)}{w+s}\,ds.
 \label{eq:b-d-spectrum}
\end{equation}
The subtraction is real on the cut, preserving its density, and the
pole is removable with $C_D(a)=A/2$. To justify the spectral equality,
apply Cauchy's formula to $C_D(z)/(z-w)$ around the cut: the outer
circle contributes $O(1/\log R)\to0$, and the circle at zero
vanishes because $C_D(z)\to A$ there. No isolated spectral pole
remains. The integral converges even at $w=0$; putting
$t=\log(s/a)$ gives $C_D(0)=A$, so $0<C_D(w)\leq A$.
This is the analytic-perturbation-theory subtraction
\cite{ShirkovSolovtsov1997,MiltonSolovtsov1997}.

For the repaired-flow choice set
\begin{equation}
 C_R(z)=A[1-K(z)^{-1}],\qquad K(z)=1+\operatorname{Log}(1+b/z).
 \label{eq:b-r-analytic}
\end{equation}
Its cut is $[-b,0]$. For $z=-s+i0$, $0<s<b$, the argument
$1+b/z$ reaches the negative axis from \emph{below}. Thus, with
$L_s=1+\log(b/s-1)$,
\begin{equation}
 K(-s+i0)=L_s-i\pi,\qquad
 C_R(-s+i0)=A\left[1-\frac{L_s+i\pi}{L_s^2+\pi^2}\right].
 \label{eq:b-r-cut}
\end{equation}
Its negative imaginary part gives a positive density. The equation
$K(z)=0$ requires $1+b/z=e^{-1}$, so its only zero is the simple
one at $z=-s_0$, with $s_0=eb/(e-1)>b$. Since
$K'(z)=-b/[z(z+b)]<0$ at this zero, the residue is
$Z=-A/K'(-s_0)=Aeb/(e-1)^2>0$. Cauchy's formula therefore gives
\begin{equation}
 \begin{aligned}
 C_R(w)&=\frac{Z}{w+s_0}+\int_0^b\frac{\rho_R(s)}{w+s}\,ds,\\
 \rho_R(s)&=\frac{A}{[1+\log(b/s-1)]^2+\pi^2},\quad 0<s<b.
 \end{aligned}
 \label{eq:b-r-spectrum}
\end{equation}
There is no further term: both endpoint circles vanish because
$C_R\to A$, and $C_R(z)=Ab/z+O(z^{-2})$ removes the outer contour.
The density vanishes at the endpoints and its integral converges at
$w=0$. Also $K(w)>1$ for $w>0$, giving $0\leq C_R(w)\leq A$,
$C_R(0)=A$ and no positive-Euclidean pole. Its atom is an isolated
mass $\sqrt{s_0}$ above the upper end of its continuum.

\begin{figure}[htbp]
\centering
\begin{tikzpicture}[>=Latex,x=1cm,y=1cm,font=\small]
 \draw[->,black!55] (1,1.3)--(10.8,1.3) node[right]{$s$};
 \draw[->,black!55] (1,0)--(10.8,0) node[right]{$s$};
 \node[left] at (.8,1.3) {$C_D$}; \node[left] at (.8,0) {$C_R$};
 \draw[blue!65!black,line width=2pt,->] (1,1.3)--(10.3,1.3);
 \draw[blue!65!black,line width=2pt] (1,0)--(6,0);
 \draw[->,green!45!black,line width=1.2pt] (8.1,0)--(8.1,.75);
 \node[above,green!45!black] at (8.1,.75) {atom $Z$};
 \node[below] at (1,1.3) {$0$}; \node[below] at (1,0) {$0$};
 \node[below] at (6,0) {$b$}; \node[below] at (8.1,0) {$s_0>b$};
 \node[above] at (4.4,1.3) {continuum to infinity};
 \node[above] at (3.4,0) {finite continuum};
\end{tikzpicture}
\caption{Mass-squared supports, schematically; the lines show support,
not density height. $C_R$ also has the positive atom at $s_0>b$.}
\label{fig:b-spectral-support}
\end{figure}

\subsection{A common flatness estimate after contact matching}

Positivity belongs to $C_j$; coefficient matching uses $W_j=C_j-A$,
$j=D,R$. This subtraction is the contact
$-A\delta^{(4)}(x-y)$, independent of $g$ and \emph{not flat}.
It changes the coincidence extension. Each OS probe group has support
in $0<t_1<\cdots<t_n$: internal contacts require equal times, while
contacts across reflected groups equate negative and positive times.
Both are excluded, also in mixed
field-order forms. Thus the Wick extensions using $W_j$ reconstruct
the same field as the respective $C_j$, although $W_j$ is not a
positive Gaussian probability covariance. The flat difference is
$W_j-W^0$.

In fixed momentum units use
\begin{equation}
 p_3(f)=\max_{r=0,1}\sup_{w\geq0}(1+w)^3|f^{(r)}(w)|.
 \label{eq:b-test-norm}
\end{equation}
Split the pairing into $[0,a/2]$, $[a/2,3a/2]$ and
$[3a/2,\infty)$, taking $a$ sufficiently large at fixed $\mu$.
Below the pole $|W_D-W^0|=|Aw/(w-a)|\leq2Aw/a$.
For $R$ the needed inequalities are
\begin{equation}
 \begin{aligned}
 W_R&=-A/D_R,\qquad W^0=-A/D_0,\\
 D_R&=1+\log(1+b/w)\geq1,\\
 D_0&=\log(a/w)\geq\log2,\\
 D_R-D_0&=\log[1+e(w-\mu^2)/a],\\
 |D_R-D_0|&\leq C(w+\mu^2)/a\qquad(0<w\leq a/2).
 \end{aligned}
 \label{eq:b-lower-bounds}
\end{equation}
The logarithm's argument stays away from zero, justifying the last
bound. Taking reciprocal differences gives
$|W_R-W^0|\leq CA(w+\mu^2)/a$.
Since both $\int|f|$ and $\int w|f|$ are at most $p_3(f)/2$,
the two lower pieces are $O_\mu(Ap_3(f)/a)$.

Across the pole both cases use
$W^0=Aa\PV(1/(w-a))+Ar(w/a)$, where
$r(x)=1/\log x-1/(x-1)$ is bounded on $[1/2,3/2]$ and
$r(1)=1/2$. Symmetric PV cancels the constant part of $f$, so
\begin{equation}
 \left|Aa\int_{a/2}^{3a/2}\frac{f(w)-f(a)}{w-a}\,dw\right|
 \leq Aa^2\sup_{[a/2,3a/2]}|f'|
 \leq\frac{8A}{a}p_3(f).
 \label{eq:b-middle-bound}
\end{equation}
The bounded remainder and $|W_j|\leq A$ contribute only
$O(Ap_3(f)/a^2)$. Above $3a/2$, $|W^0|\leq A/\log(3/2)$;
Schwartz decay gives $\int_{3a/2}^\infty|f|=O(p_3(f)/a^2)$.
Adding the regions proves
\begin{equation}
 |\langle W_j-W^0,f\rangle|
 \leq\frac{C_\mu A}{a}p_3(f)
 \leq C'_\mu A e^{-1/g}p_3(f),\qquad j=D,R.
 \label{eq:b-common-flatness}
\end{equation}
No zero-endpoint condition was used. The bound is uniform on bounded
sets of tests, giving strong distributional flatness and preserving
every prescribed massless leading-$1/N$ collective two-point coefficient.

For arbitrary four-dimensional momentum tests, use the radial averaging map
\begin{equation}
 (\mathcal R\varphi)(w)=\frac w2\int_{S^3}
          \varphi(\sqrt w\,\omega)\,d\omega,\qquad
 \mathcal R:\Sch(\mathbb R^4)\to\Sch([0,\infty)).
 \label{eq:b-radial-map}
\end{equation}
Angular averaging is even in $\sqrt w$, so the map is smooth at zero;
its seminorms are controlled by finitely many Schwartz seminorms of
$\varphi$. It is therefore continuous. Defining
$\langle W(P^2),\varphi\rangle=\langle W,\mathcal R\varphi\rangle$
transfers the bound to $\Sch'(\mathbb R^4)$, and continuous inverse
Fourier transformation transfers it to position space. Each Wick pairing
is a tensor product over disjoint insertion pairs. At each fixed Wick
order, a finite telescoping sum places one flat two-point difference in
every changed term. The remainder and flatness estimates give uniform
continuity bounds at small $g$ for the other factors.
This gives the joint-test matching in Proposition~\ref{prop:completions};
the reconstruction and joint-test details are in Appendix~\ref{app:gaussian}.

\subsection{A common exact reference value}

At the reference point,
\begin{equation}
 -W_R(\mu^2;g)/A=g,\qquad -W_D(\mu^2;g)/A
 =g_D(g):=g-(e^{1/g}-1)^{-1}.
 \label{eq:b-reference-values}
\end{equation}
The dispersive inherited label differs from its exact two-point value
by a flat term. Since $g_D'(g)\to1$, choose its local inverse
$\gamma(\eta)$ at a desired small exact value $\eta$:
$g_D(\gamma(\eta))=\eta$, with
$\gamma(\eta)-\eta=O(e^{-1/\eta})$.
Comparing $D$ at $\gamma(\eta)$ with $R$ at $\eta$ gives both
$W(\mu^2)=-A\eta$, but their scales are separately
\begin{equation}
 a_D=\mu^2e^{1/\gamma(\eta)},\qquad
 b_R=\mu^2(e^{1/\eta-1}-1).
 \label{eq:b-matched-scales}
\end{equation}
The opening relation $b=a/e-\mu^2$ compares equal inherited labels;
it does not identify these two scales after relabelling. Substituting
a flat label change in each finite truncation preserves all
coefficients; its remainder bound remains valid because
$\gamma(\eta)/\eta\to1$. This selects members of the two families,
with the probing momentum kept fixed.

\begin{table}[htbp]
\centering\small
\caption{The two positive completions. After exact reference matching,
each scale is evaluated at its own label as in
\eqref{eq:b-matched-scales}; matching uses $W_j=C_j-A$.}
\label{tab:b-comparison}
\begin{tabular}{@{}>{\raggedright\arraybackslash}p{.24\linewidth}
 >{\raggedright\arraybackslash}p{.34\linewidth}
 >{\raggedright\arraybackslash}p{.36\linewidth}@{}}
\toprule
Property & Dispersive $D$ & Repaired-flow $R$\\
\midrule
Cut; continuous masses & $(-\infty,0]$; all $s>0$ & $[-b,0]$; $0<s<b$\\[3pt]
Spectral atom & None & $s_0=eb/(e-1)$, $Z=Aeb/(e-1)^2>0$\\[3pt]
Euclidean pole removal & Residue $Aa$ cancelled; $C_D(a)=A/2$ &
 $D_R(w)\geq1$; no pole\\[3pt]
Exact reference & $-W_D(\mu^2)/A=g_D(g)$; flat inverse required &
 $-W_R(\mu^2)/A=g$ already\\[3pt]
Large-$w$ covariance & $A/\log(w/a)$ & $Ab/w$\\
\bottomrule
\end{tabular}
\end{table}

These different spectra and short-distance behaviors distinguish two
generalized free fields even with the same exact reference value and
every massless leading-$1/N$ collective Wick coefficient. Positivity
constrains a completion but does not choose between these measures; further non-perturbative
information must do so.

\setcounter{section}{2}
\section{Wick hierarchies, contact terms and Euclidean regularity}
\label{app:gaussian}

How does a flat two-point repair preserve every Wick coefficient,
and what can changes at coincidence leave undetermined? Matching
tests the full distributions; OS reconstruction uses a domain that
ignores contacts. We give the whole-hierarchy checks for
Proposition~\ref{prop:completions}, a conditional regularity argument
for the physical subtraction response, then exhibit two flat contact
changes preserving the exact reference coupling.

\subsection{Pairings, joint tests and reconstruction}

For a centered scalar with two-point distribution $U$, the first
nontrivial Wick rule is
\begin{equation}
 S_4=U(x_1-x_2)U(x_3-x_4)+U(x_1-x_3)U(x_2-x_4)
                         +U(x_1-x_4)U(x_2-x_3).
 \label{eq:app-c-four-point}
\end{equation}
Each product acts in disjoint pair variables, so it is a tensor
product. Subtracting the disconnected pairings gives zero; all
connected Wick functions above order two similarly vanish.

For $j=D,R$, the completions of Section~\ref{sec:collective}, define
\[
 \begin{aligned}
 T_j&=\mathcal F^{-1}[C_j(P^2)],&
 U_j&=\mathcal F^{-1}[W_j(P^2)],\\
 V&=\mathcal F^{-1}[W^0(P^2)],& W_j&=C_j-A,\qquad A=8\pi^2.
 \end{aligned}
\]
Here $w=P^2$ is Euclidean squared momentum and $\mu$ the reference
momentum. The comparison has scale $a=\mu^2e^{1/g}$ and kernel
$W^0(w)=A\PV[\log(w/a)]^{-1}$.
The positive covariances $C_j$ have spectra supported on mass squared
$s\geq0$. The inverse Fourier convention is
$(2\pi)^{-4}\int e^{iP\cdot x}\widehat U(P)d^4P$.
Let $S^j,S^0$ be the Wick hierarchies of $U_j,V$, with $S_0=1$
and odd functions zero.

For a pairing $\pi=\{(i_\ell,k_\ell)\}_{\ell=1}^r$, use
$z_\ell=x_{i_\ell}-x_{k_\ell}$, $y_\ell=x_{k_\ell}$ and the
inverse linear map $L_\pi(z,y)=x$. On any joint Schwartz test $F$,
\begin{equation}
 (J_\pi F)(z)=\int F(L_\pi(z,y))dy,\qquad
 \langle S^j_{2r},F\rangle
     =\sum_\pi\langle U_j^{\otimes r},J_\pi F\rangle.
 \label{eq:app-c-joint-wick}
\end{equation}
Schwartz decay controls the $y$ integral and its $z$ derivatives,
making $J_\pi$ continuous. Thus the rule includes tests correlating
all insertion coordinates.

The two-point estimates used in this Wick-hierarchy comparison are
\begin{equation}
 |\langle U_j-V,\psi\rangle|\leq K_\mu e^{-1/g}q(\psi),\qquad
 |\langle U_j,\psi\rangle|+|\langle V,\psi\rangle|\leq B_\mu q(\psi),
 \label{eq:app-c-covariance-bounds}
\end{equation}
with one continuous Schwartz seminorm $q$, uniformly at small $g$.
The half-line bounds \eqref{eq:dispersive-flat}, \eqref{eq:rg-flat}
and \eqref{eq:ray-remainder} give these after continuous radial
transfer and Fourier transformation. Telescoping gives
\begin{equation}
 U_j^{\otimes r}-V^{\otimes r}
 =\sum_{\ell=1}^r V^{\otimes(\ell-1)}\otimes(U_j-V)
                          \otimes U_j^{\otimes(r-\ell)}.
 \label{eq:app-c-telescoping}
\end{equation}
Each term contains one flat factor and bounded remaining factors.
Their tensor estimates and $J_\pi$ therefore give
\begin{equation}
 |\langle S^j_{2r}-S^0_{2r},F\rangle|
       \leq C_{r,\mu}e^{-1/g}q_r(F)
 \label{eq:app-c-flatness}
\end{equation}
for a continuous joint seminorm $q_r$. Expanding the factors gives
every massless leading-$1/N$ collective Wick coefficient, uniformly
on bounded joint-test sets at each fixed $r$; the constants and seminorm
orders may depend on $r$.

Comparing the Wick hierarchies of $U_j$ and $T_j$, every changed term
contains a contact because $U_j=T_j-A\delta^{(4)}$. They vanish on
\[
 \Sch_{n,0}=\{F\in\Sch((\mathbb R^4)^n):
       \partial^\alpha F|_{\Delta_n}=0\ \forall\alpha\},\qquad
 \Delta_n=\bigcup_{i<k}\{x_i=x_k\}.
\]
This union includes partial coincidences
\cite[Section~3, Remark~1]{OsterwalderSchrader1973}.
For reflection positivity, write
\begin{equation}
 \sum_{m,n}\langle S_{m+n},f_m^{\theta *}\otimes f_n\rangle,
 \qquad f_m^{\theta *}(x)=
       \overline{f_m(\theta x_m,\ldots,\theta x_1)},
 \label{eq:app-c-mixed-reflection}
\end{equation}
where $\theta(t,\mathbf x)=(-t,\mathbf x)$ and each $f_n$ has
support in $0<t_1<\cdots<t_n$. Ordering excludes internal contacts;
negative reflected times cannot coincide with positive times.
Every $(m,n)$ entry, including $m\ne n$ and vacuum entries, is
unchanged. Thus $U_j$ reconstructs the same positive theory as $T_j$.
The nonflat subtraction $C_j\mapsto C_j-A$ chooses the matching
extension; \eqref{eq:app-c-flatness} concerns its difference from $V$.

Growth needs an additional joint-test argument. The bounds
$0\leq C_j\leq A$ and $|W_j|\leq A$ give
$|S_2(f\otimes h)|\leq A\|f\|_2\|h\|_2$ by Parseval.
With the fixed derivative/weight norm
$p(f)=\max_{|\alpha|\leq3}\sup_x(1+|x|)^3|\partial^\alpha f(x)|$
on $\mathbb R^4$, $\|f\|_2\leq c\,p(f)$, hence
\begin{equation}
 |S_{2r}(f_1\otimes\cdots\otimes f_{2r})|
 \leq(2r-1)!!(c_0A)^r\prod_i p(f_i),\qquad
 (2r-1)!!\leq2^r r!.
 \label{eq:app-c-growth}
\end{equation}
This is the factorized-test growth condition $E0''$ of Osterwalder
and Schrader (1975), with one fixed Schwartz order per insertion.
The theorem in their Appendix converts it to the joint-test condition
$E0'$:
$|\langle S_n,F\rangle|\leq\alpha(n!)^\beta\|F\|_{ns}$ for fixed
$s,\alpha,\beta$, where
$\|F\|_k=\max_{|\gamma|\leq k}\sup_X(1+|X|)^k|\partial^\gamma F(X)|$
\cite[Section~IV.1 and Appendix]{OsterwalderSchrader1975}.
For clustering, restrict the Fourier transform of the joint test to
momenta opposite within each pair. This restriction is Schwartz, so
bounded multipliers make each Wick integral absolutely integrable.
Translate one group of insertion points by a large spatial vector.
Any contraction joining the groups contributes a nonzero linear phase
in the independent spatial pair momenta. Riemann--Lebesgue removes
these terms at large separation, leaving the product of within-group
pairings.

\subsection{Regularity and two reference-preserving contact changes}

A convergent nonzero positive scalar spectrum gives
$D(z)=\int_0^\infty\rho(ds)/(s+z)$ and $D(w)>0$ for $w>0$
\cite{Lehmann1954}. For a tempered measure needing ultraviolet
subtractions, choose $r\geq1$ with
$\int(1+s)^{-r-1}\rho(ds)<\infty$ and $w_0>0$. Then
\begin{equation}
 D(z)=P_{r-1}(z)+(-1)^r(z-w_0)^r
          \int_0^\infty\frac{\rho(ds)}{(s+z)(s+w_0)^r},
 \label{eq:app-c-subtracted-spectrum}
\end{equation}
where $P_{r-1}$ is a subtraction polynomial of degree at most $r-1$.
On compact sets
off $(-\infty,0]$, $|s+z|\geq c(1+s)$ controls the integral and its
complex derivatives. Thus $D$ is analytic at every $w>0$, even if
the spectrum reaches zero. Polynomial contacts can nevertheless
create zeros.

\paragraph{A conditional four-point argument.}
The aim is regularity in all independent Euclidean momentum components,
sufficient for the point coupling of Section~\ref{sec:poles}.
Fix one reconstructed theory $\mathcal T_g$. At each field order
needed below, choose once and for all a Schwartz smoothing sequence
on relative insertion coordinates,
converging to the delta distribution in $\Sch'$. Convolve each
Wightman ordering with this same sequence before restricting the
regularized functions to time-ordering regions. For \emph{every}
cycle in Ruelle's construction, a signed sum of these restricted
orderings with cancelling boundaries, assume convergence in $\Sch'$
and Lorentz invariance of the limit. These are his Assumptions~A;
they concern the cycle sums, not separate limits of every restricted
ordering \cite[Section~4, pp.~362--363]{Ruelle1961}.
The time-ordered cycle then defines a limit, which may depend on the
chosen smoothing sequence. Assume that the chosen full Euclidean
two-point and connected four-point Fourier distributions, including
their contact terms, agree near the probing configurations with the
continuations of the corresponding time-ordered limits.
After removing the overall conservation delta, denote the fixed
component projection of the connected four-point function by $H_g$
and the external scalar propagator by $D_{2,g}$.
Removing disconnected terms first avoids their additional
partial-conservation deltas. All four momenta are incoming and sum
to zero.

For complex energies and real spatial momenta, Ruelle's Theorem~3
allows a singularity only if some proper nonempty subset satisfies
\begin{equation}
 \sum_{i\in X}\operatorname{Im}k_i^0=0,\qquad
 \left(\sum_{i\in X}\operatorname{Re}k_i\right)^2\geq0,
 \qquad \varnothing\ne X\subsetneq\{1,2,3,4\}.
 \label{eq:ruelle-partial-momentum}
\end{equation}
The square uses Minkowski signature $(+,-,-,-)$.
At Euclidean momenta $p_i^E=(\omega_i,\mathbf q_i)$ set
$k_i=(i\omega_i,\mathbf q_i)$. Nonexceptionality means
$\sum_{i\in X}p_i^E\ne0$ for every such $X$.
If $\sum_X\omega_i\ne0$, the first condition fails. Otherwise
$\sum_X\mathbf q_i\ne0$, so the real partial momentum has square
$-|\sum_X\mathbf q_i|^2<0$ and the second fails. No subset satisfies
both conditions.

The proof constructs functions on domains in which all independent
four-momentum components are complex (momentum tubes), and continues
them through nonsingular boundary portions (regular faces). It therefore supplies a joint
holomorphic neighbourhood in the independent momenta on
$\sum_i k_i=0$ \cite[Section~5, pp.~364--366]{Ruelle1961}.
This joint regularity controls full momentum packets and the scale
curve, whose spatial momenta also vary with $Q$.
The packets also sample momentum directions away from that curve.
No mass gap is needed for this nonexceptional argument.
If each external $D_{2,g}$ is also analytic and nonzero near its
probing momentum, division by their product preserves the numerator's
joint analyticity. Scale fixed nonexceptional reference
momenta $p_i^*$ by $p_i(Q)=(Q/\mu)p_i^*$. This makes the normalized quotient and its scale
derivative regular at every finite $Q>0$. Normalized shrinking
packets then converge to these values, independently of their shapes
and widths. Continuity alone would suffice for boundedness on compact
$Q$ intervals with nonzero external values.

OS reconstruction and fixed-test coefficient matching do not
establish these assumed cycle limits or the identification of the
chosen extension. The gap in a general time-ordering construction is
noted in \cite[Section~10]{KravchukQiaoRychkov2021}, which identifies
Bogolubov--Logunov--Oksak--Todorov, \emph{General Principles of Quantum
Field Theory}, p.~505, as a reference for this point. Stronger positivity
including coincident insertions and factorial Schwartz growth give
another time-ordering construction
\cite[Theorem~2 and Corollary~3]{EckmannEpstein1979}, without by
themselves verifying Ruelle's particular common regularization.

Both examples below change coincidence extensions of an already
supplied scalar hierarchy. Recomputing the quotient with altered
extensions can change the Euclidean coupling obtained from the packet
quotient. The reconstructed fields and measurements of fixed observables
in the reconstructed theory remain unchanged. Set
$\varepsilon=e^{-1/g}$ and use symmetric momenta with
$p_i^2=3Q^2/4$, $(p_i+p_k)^2=Q^2$, and reference $Q=\mu$.

\paragraph{A contact term creates a propagator zero.}
Let $D_0$ be its propagator with a convergent nonzero positive spectrum.
Put
$w_*=3\mu^2/4$ and
\begin{equation}
 D_\varepsilon(w)=D_0(w)-\varepsilon\frac{w-w_*}{\mu^4}.
 \label{eq:app-c-flat-amputation-zero}
\end{equation}
This is the flat tempered contact
$\varepsilon\mu^{-4}(\Delta+w_*)\delta^{(4)}(x-y)$, with $\Delta$
the Euclidean Laplacian. It preserves every particle pole's mass and
residue and the reference external value. The spectrum gives
$D_0'(w)<0$ and $D_0(w)\to0$ by domination with its integrable
kernel at $w_*$. Consequently $D_\varepsilon$ decreases from a
positive value at $w_*$ to $-\infty$, crossing one simple zero.
If the connected four-point numerator is regular and does not cancel
the external propagator zeros, the quotient in this altered convention
has a pole. A nonzero numerator at the zero configuration suffices;
a denominator zero alone does not establish a pole.
Keep the full
$n$-point extensions for $n\geq3$ fixed. Recomputing the connected
four-point function changes only its disconnected products, whose
Fourier transforms contain $\delta^{(4)}(p_i+p_k)$. These have no
support near nonexceptional configurations. Hence the exact reference
coupling and all OS restrictions remain unchanged.

\paragraph{Partial four-point channel pole.}
Choose fixed $\Lambda_c^2>0$, $\Lambda_c^2\ne\mu^2$, and
\begin{equation}
 \begin{aligned}
 \widehat K(P)&=\mu^{-8}(P^2-\mu^2)\PV\frac1{P^2-\Lambda_c^2},\\
 T_{12|34}(x)&=\delta^{(4)}(x_1-x_2)\delta^{(4)}(x_3-x_4)
                                                    K(x_1-x_3).
 \end{aligned}
 \label{eq:app-c-partial-contact}
\end{equation}
Symmetric exclusion in $P^2$ defines a tempered shell principal value
with nonzero residue. Pairing $T$ means restricting a test to
$(X,X,Y,Y)$ and retaining the free separation $X-Y$.
Add $\varepsilon(T_{12|34}+T_{13|24}+T_{14|23})$ to the full
four-point extension. For component fields, attach the tensor
$\delta_{a_i a_k}\delta_{a_l a_m}$ to each pairing $(ik|lm)$. Real even rotational
$K$ and the three pairings ensure permutation and $O(N)$ symmetry.
Partial-diagonal support makes the addition zero on $\Sch_{4,0}$
and on the stated strictly ordered reflection forms;
its fixed tempered kernel and flat prefactor preserve every coefficient.

Its Fourier transform is
\begin{equation}
 \widehat T_{12|34}(p)=(2\pi)^4\delta^{(4)}\!\left(\sum_i p_i\right)
                                              \widehat K(p_1+p_2).
 \label{eq:app-c-partial-fourier}
\end{equation}
There is no pair-conservation delta: $K$ connects the pair centers.
Every reference channel has $P^2=\mu^2$, where the numerator vanishes,
so the exact subtraction coupling is unchanged. At $Q^2=\Lambda_c^2$, the added reduced Fourier term has a pole.
Dividing by analytic, nonzero external propagators preserves it, provided
the chosen component projection does not annihilate the added term;
the singlet tensor weight is $N^2+2N>0$. A complete-diagonal contact instead has a polynomial
reduced Fourier transform and cannot itself create such a pole.
The pole-bearing partial extension falls outside Ruelle's compatible
time ordering. Both operations leave separated quartic correlations
and fixed physical responses intact. Their convention-dependent poles
are not counterexamples with a physical Euclidean Landau pole in the
same reconstructed QFT.

\setcounter{section}{3}
\section{The massive saddle: channel data, positive spectrum and flat matching}
\label{app:massive}

Does a massive gap solution turn the bubble channel data of
Section~\ref{sec:collective} into a positive collective covariance
without a Euclidean pole?
Romatschke's massive saddle, the self-consistent leading large-$N$
mass solution, does. We derive its positive spectrum and
smeared weak-coupling expansion, distinguishing the calculated
response from a Lagrangian coupling coordinate that can diverge.
The coefficient comparison approaches $g=0$ from negative values
while the generated constituent mass tends to zero, testing collective
completion separately from the positive-$g$, fixed-mass conjecture.

\subsection{Gap cancellation and the collective channel}

Table~\ref{tab:massive-notation} uses the 2023 cutoff convention
\cite{Romatschke2023}. Energies $\nu,\mu,Q,M$ have units of mass;
$w$ and the spectral variable $s$ are squared masses.
\begin{table}[htbp]
\centering
\small
\caption{Massive-branch conventions at the retained order in $1/N$.}
\label{tab:massive-notation}
\begin{tabular}{@{}>{\raggedright\arraybackslash}p{.25\linewidth}>{\raggedright\arraybackslash}p{.69\linewidth}@{}}
\toprule
Symbol & Meaning\\
\midrule
$\lambda_0$, $\lambda_R(\nu)$ & Bare and renormalized Lagrangian
quartic parameters; $h_B=8\lambda_0$ in our convention.\\[3pt]
$M$, $F_M$, $W_M$ & Constituent pole mass, elementary channel
coefficient and collective kernel $W_M=-F_M/4$.\\[3pt]
$g=-u$, $0<u<1$ & Supplied reference value
$g=F_M(\mu^2)/(32\pi^2)$ at fixed energy $\mu$.\\[3pt]
$\nu$, $Q$, $w=Q^2$; $e$ & Renormalization energy, probing energy,
squared channel momentum; $e=\exp(1)$ is Euler's number.\\
\bottomrule
\end{tabular}
\end{table}

At zero renormalized mass parameter, the gap equation has two
self-consistent mass solutions: $M=0$ retains massless constituent
propagation, while $M>0$ generates a constituent scale. The tachyon-free
massive branch and its singlet bound-state channel were analysed earlier
by Abbott, Kang and Schnitzer \cite[Sections~IV--V]{AbbottKangSchnitzer1976}.\footnote{For the conventional real-field continuation with negative bare coupling,
Bardeen and Moshe found vacuum-instability problems, including metastable
decay exponentially suppressed in $N$ \cite{BardeenMoshe1983}.
The positive collective Wick field below satisfies the OS axioms
independently of that continuation. Romatschke's complex field contours
specify a different continuation \cite{Romatschke2023}.}
We use the massive solution, which obeys
\begin{equation}
 \log\frac{M^2}{\nu^2}=\frac{4\pi^2}{\lambda_R(\nu)}.
 \label{eq:massive-gap}
\end{equation}
Thus $M$ is generated rather than an independently fixed light mass.
At fixed $M$, $\lambda_R(\nu)=4\pi^2/\log(M^2/\nu^2)$ diverges
at $\nu=M$; the channel response instead depends on $M$ and $w$.
Define
\begin{equation}
 I_M(w)=\int_0^1\log[1+w t(1-t)/M^2]dt,\qquad
 H_M(w)=1+I_M(w),\qquad A=8\pi^2.
 \label{eq:massive-bubble-function}
\end{equation}
For $w>0$, the integral is
$2\sqrt{(w+4M^2)/w}\operatorname{artanh}\sqrt{w/(w+4M^2)}-2$.
The denominator of the published four-point channel is therefore
\begin{equation}
 D_{\rm ch}(w)=\frac{4\pi^2}{\lambda_R(\nu)}
       -\log\frac{M^2}{\nu^2}-1-I_M(w)=-H_M(w).
 \label{eq:massive-gap-denominator}
\end{equation}
The gap relation cancels the first two terms. The surviving constant
$-1$ fixes the coupling sign and bound-state location.

Romatschke defines $\Gamma_4$ as minus the connected amputated
\emph{elementary} four-point response
\cite[eqs.~(16)--(20)]{Romatschke2023}. Its leading channel gives
\begin{equation}
 \Gamma_4^{(s)}(w)=\frac{F_M(w)}N+O(N^{-2}),\qquad
 F_M(w)=-\frac{32\pi^2}{H_M(w)},\qquad
 W_M(w)=-\frac{F_M(w)}4=\frac{A}{H_M(w)}.
 \label{eq:massive-channel-identity}
\end{equation}
The normalized collective field is the regulated
$\chi_N=h_B[\Phi_{B,a}\Phi_{B,a}-\langle\Phi_{B,a}\Phi_{B,a}\rangle]/(4\sqrt N)$
of \eqref{eq:chi}. The algebra in \eqref{eq:collective-algebra}
gives its leading covariance $C_{B,2}=h_B/4-F_B/4$.
Subtracting the momentum constant $h_B/4$, a local
$\delta^{(4)}(x-y)$ term, leaves $-F_B/4$ and hence $W_M$ after
renormalization. These are two uses of the same leading data.
Here $W_M$ itself will be the positive covariance used for Wick pairings.
The published elementary four-point calculation includes the needed
quadratic auxiliary-field fluctuations. For the normalized collective
field, connected $r$-point fluctuations scale as $N^{(2-r)/2}$ for
$r\geq2$; retaining order $N^0$ therefore leaves a Gaussian collective
hierarchy, with only its two-point connected function.
The bubble chain retains every coupling order within this single
order of $1/N$.

\subsection{Positive spectrum and the bound state}

For $w\geq0$, $H_M(w)\geq1$, so $0<W_M(w)\leq A$ with no
Euclidean pole. Spectral positivity needs a further check. Write $z$
for the complex continuation of $w$ and use the principal logarithm,
whose cut starts at $z=-4M^2$.
On the interval before the cut,
\begin{equation}
 \begin{aligned}
 H'_M(z)&=\int_0^1\frac{t(1-t)}{M^2+z t(1-t)}dt>0,
                       \qquad -4M^2<z<0,\\
 H_M(-4M^2)&=-1,\qquad H_M(0)=1.
 \end{aligned}
 \label{eq:massive-positive-derivative}
\end{equation}
The endpoint value uses
$\int_0^1\log[(2t-1)^2]dt=-2$. Monotonicity gives exactly one
simple zero $-s_B$ with $0<s_B<4M^2$. Off the real axis the
imaginary part of $H_M(z)$ has the strict sign of
$\operatorname{Im}z$, excluding further nonreal zeros.

For $s>4M^2$, the logarithm's argument in $H_M(-s+i0)$ is negative
on an interval of length $v(s)=\sqrt{1-4M^2/s}$. The upper edge thus
has $\operatorname{Im}H_M=\pi v(s)$, and taking its reciprocal gives
$\rho_M(s)=-\pi^{-1}\operatorname{Im}W_M(-s+i0)\geq0$.
The lower boundary values are the complex conjugates of the upper ones.
Cauchy's contour around the cut and pole yields
\begin{equation}
 \begin{aligned}
 W_M(w)&=\frac{Z_B}{w+s_B}
              +\int_{4M^2}^{\infty}\frac{\rho_M(s)}{w+s}ds,\\
 Z_B&=\frac{A}{H'_M(-s_B)}>0,\qquad
 \rho_M(s)=\frac{A\sqrt{1-4M^2/s}}{|H_M(-s+i0)|^2}.
 \end{aligned}
 \label{eq:massive-spectrum}
\end{equation}
There is no outer contour term because $W_M=O(1/\log|z|)$.
The spectral integral converges: its density vanishes at threshold
and decreases as $1/\log^2(s/M^2)$ at infinity. The atom is below
the two-constituent threshold, with mass approximately $1.84M$
\cite{Romatschke2023}.

Positive superposition of free positive-frequency two-point functions
with these weights, followed by Wick pairings, constructs a
\emph{generalized free collective field}
\cite{Greenberg1961,StreaterWightman2000}. The bounded Euclidean
multiplier supplies Gaussian growth bounds with pairing factor
$(2r-1)!!$ at order $2r$; the positive lower spectral mass gives
clustering. Its connected functions above order two vanish.

\subsection{Exact negative-label normalization and flat matching}

The physically normalized leading channel value is
\begin{equation}
 g_{\rm phys}^{\rm lead}(Q;M)=\frac{F_M(Q^2)}{32\pi^2}
             =-\frac{W_M(Q^2)}A=-\frac1{H_M(Q^2)}.
 \label{eq:massive-physical-value}
\end{equation}
It lies between $-1$ and $0$ for $Q>0$; its sign is unrelated to
negative spectral weight. Fix $g=g_{\rm phys}^{\rm lead}(\mu;M)=-u$.
The condition $H_{M(u)}(\mu^2)=u^{-1}$ selects one mass, since
$H_M(\mu^2)$ decreases from infinity to one as $M$ increases.
Separate the endpoint correction by writing
\begin{equation}
 \begin{aligned}
 E_M(w)&=\int_0^1\log[1+M^2/(w t(1-t))]dt,\\
 H_M(w)&=\log(w/M^2)-1+E_M(w),\\
 M(u)^2&=\frac{\mu^2}{e}\exp[-1/u+E_{M(u)}(\mu^2)]
                         \sim\frac{\mu^2}{e}e^{-1/u}.
 \end{aligned}
 \label{eq:massive-normalization}
\end{equation}
Splitting the parameter integral near its endpoints and using
$\log(1+x)\leq x$ away from them gives
$0<E_M(w)\leq C(M^2/w)[1+\log(w/M^2)]$ for $w\geq M^2$.
This proves the mass asymptotic and the exactly normalized identity
\begin{equation}
 H_{M(u)}(w)=u^{-1}+\log(w/\mu^2)
                       +E_{M(u)}(w)-E_{M(u)}(\mu^2).
 \label{eq:massive-reference-denominator}
\end{equation}
Compare with
\begin{equation}
 K_u(w)=A\PV\frac1{u^{-1}+\log(w/\mu^2)},\qquad
 a_u=\mu^2e^{-1/u}.
 \label{eq:massive-comparison}
\end{equation}
Its formal coefficients are the collective coefficients of
\eqref{eq:log-series} with $g=-u$. Its PV pole at $a_u$ shrinks
towards zero, whereas $W_M$ has no such pole.

The comparison must work for every half-line Schwartz test $f$,
smooth up to zero and rapidly decreasing, including $f(0)\ne0$. With
$q_\mu(f)=\|f\|_\infty+\mu^2\|f'\|_\infty+\mu^{-2}\|f\|_1$,
a continuous Schwartz bound, one has
\begin{equation}
 |\langle W_{M(u)}-K_u,f\rangle|
       \leq CA\mu^2e^{-1/(2u)}q_\mu(f).
 \label{eq:massive-flat}
\end{equation}
Here $\mu$ is fixed and $C$ is independent of sufficiently small
$u$ and of $f$. The exponential is smaller than every power of the
positive parameter $u$, which is flatness on this branch.
Choose $d_u=\mu^2e^{-1/(2u)}$, so
$a_u\asymp M(u)^2\ll d_u\ll\mu^2$ as $u\downarrow0$.
The lower region contains the shrinking pole and mass correction;
the upper region has large denominators. Below $d_u$, use $W_M\leq A$ and
subtract $f(a_u)$ in the PV integral on $[0,2a_u]$.
Rescaling $w=a_u x$ bounds that integral by
$C[a_u\|f\|_\infty+a_u^2\|f'\|_\infty]$; the rest has
$\log(w/a_u)\geq\log2$. Above $d_u$, both denominators are at
least $1/(2u)$. The endpoint bound gives their difference as
$O(u^{-1}e^{-1/(2u)})$, so the reciprocal difference is
$O(Au e^{-1/(2u)})$ and its pairing is controlled by $\|f\|_1$.
This proves flatness even though $W_{M(u)}(0)=A$: the region with
an order-one correction shrinks exponentially.

To verify the smeared coefficients themselves, on $w\geq d_u$
use the finite geometric identity with $\ell=\log(w/\mu^2)$:
\[
 \frac1{u^{-1}+\ell}
 =\sum_{j=0}^{K-1}(-\ell)^j u^{j+1}
       +\frac{u^{K+1}(-\ell)^K}{1+u\ell},\qquad
 1+u\ell\geq\frac12.
\]
The remainder is bounded by the integrable logarithmic moment of
$|f|$ times $2u^{K+1}$. Below $d_u$ the PV contribution and the
omitted moments, of size $O(d_u(1+u^{-j}))$, are flat. Hence
\begin{equation}
 \langle W_{M(u)},f\rangle
 \sim A\sum_{j\geq0}u^{j+1}
               \int_0^\infty[-\log(w/\mu^2)]^j f(w)dw.
 \label{eq:massive-jet}
\end{equation}
The expansion is uniform on bounded sets of tests. The continuous
radial map $f(w)=(w/2)\int_{S^3}\varphi(\sqrt w\,\omega)d\omega$
transfers it to arbitrary $\varphi\in\Sch(\mathbb R^4)$, and
Fourier transformation transfers it to position space.

\subsection{Running, family selection and finite-$N$ scope}

Varying $u$ at fixed $\mu$ selects different masses $M(u)$ and
therefore different theories. Energy running instead holds $M$ fixed
while $Q$ varies. For this leading kernel,
\begin{equation}
 \frac{d g_{\rm phys}^{\rm lead}}{d\log Q^2}
       =(g_{\rm phys}^{\rm lead})^2B(Q^2/M^2),\qquad
 B(r)=\int_0^1\frac{r t(1-t)}{1+r t(1-t)}dt.
 \label{eq:massive-beta}
\end{equation}
Here $B$ increases from zero to one. The predicted positive value
$\bar u=-g_{\rm phys}^{\rm lead}$ obeys
$d\bar u/d\log Q^2=-\bar u^2B$; its reference label $u$ stays fixed.
As $Q$ grows, $g_{\rm phys}^{\rm lead}$ increases from its infrared
limit $-1$ towards zero, while $\bar u$ decreases from one towards zero.
Romatschke's refined effective convention in \emph{Life at the Landau pole}
is $g_{R,\rm eff}(Q)=4\pi^2/H_M(e^2Q^2)
=-4\pi^2g_{\rm phys}^{\rm lead}(eQ;M)$
\cite[Section~II.3, eqs.~(47)--(50)]{Romatschke2022}.
Both its sign normalization and probing scale differ. That paper's
renormalization scheme also gives a different named Landau scale;
our comparison uses the common constituent mass $M$.

\begin{table}[htbp]
\centering
\small
\caption{Collective examples versus the interacting target.}
\label{tab:massive-branches}
\begin{tabular}{@{}>{\raggedright\arraybackslash}p{.18\linewidth}>{\raggedright\arraybackslash}p{.17\linewidth}>{\raggedright\arraybackslash}p{.23\linewidth}>{\raggedright\arraybackslash}p{.32\linewidth}@{}}
\toprule
Case & Label limit & Mass condition & Euclidean pole and status\\
\midrule
Massless collective & $g\downarrow0^+$ & $M=0$ &
UV pole removed by positive completion.\\[3pt]
Massive saddle & $g=-u\uparrow0$ &
$M(u)^2\sim{}$\newline$(\mu^2/e)e^{-1/u}$ &
IR comparison pole removed by mass; positive spectrum.\\[3pt]
Conjecture~\ref{conj:existence} & $g\downarrow0^+$ &
Fixed light mass $m>0$, finite $N$ &
Exact interacting elementary hierarchy remains open.\\
\bottomrule
\end{tabular}
\end{table}

Romatschke uses the nonzero elementary four-point coefficient to
motivate interacting finite-$N$ physics, including an $N=4$ Higgs-sector
application \cite[Section~IV]{Romatschke2023}. At fixed component indices
the amplitude is $O(1/N)$; for normalized singlet two-particle states,
the coherent sum over component pairs can retain an order-one response. The bound state records binding in this
pair channel. In the collective Wick field constructed here it is a
spectral particle, while higher connected correlations vanish.
Thus the same leading data support an explicit generalized free
collective realization and a motivation for finite-$N$ interaction.

The pressure and scattering calculations, and the reported small
finite-cutoff $N=1$ complex-contour lattices, do not establish the complete
continuum hierarchy with the conjecture's positive-$g$, fixed-light-mass
normalization. Regularity of an exact interacting point coupling also
requires the shape-independent packet limits and nonsingular amputation
of Section~\ref{sec:poles}: continuous projected four-point values and
continuous nonzero external two-point values near the subtraction
configurations give one sufficient condition.

\setcounter{section}{4}
\section{An interacting pair response with a timelike pole}
\label{app:timelike}

Can a small response select an interacting theory whose prediction at
another timelike momentum has a pole? In this rigorously constructed
example, a stable bound state makes the response's point value diverge
while every fixed smooth packet of compact support crossing it gives a
finite answer. The deliberately unusual coupling convention follows a
composite two-point response to timelike momenta, illustrating the
coexistence discussed in Section~\ref{sec:poles}.
This is a physical bound-state pole in a composite response, not a
Euclidean elementary-four-point Landau pole.

\subsection{A fixed composite operator and an invertible reference response}

In $1+1$ spacetime dimensions, use a fixed free reference mass $m_0=1$
as the mass unit and take
\begin{equation}
 \mathcal L_E=\frac12(\partial\phi)^2+\frac12\phi^2
       +\lambda\bigl({:\!\phi^6\!:}_1-{:\!\phi^4\!:}_1\bigr),
 \qquad 0\leq\lambda\ll1.
 \label{eq:timelike-model}
\end{equation}
The colons denote Wick ordering with respect to the mass-one free
covariance. This subtraction stays fixed. The positive sextic term
stabilizes the attractive quartic interaction. For sufficiently small
positive $\lambda$, the constructed model satisfies the OS axioms
and has nontrivial scattering
\cite[Theorem~1 and p.~11]{EckmannEpsteinFrohlich1976}.
Its physical elementary-particle mass $m_\lambda$ is an output,
with $m_{\lambda=0}=1$; it need not remain equal to the reference mass.

Keep the composite operator $O={:\!\phi^2\!:}_1$ and its normalization fixed in
every theory. This Wick-square operator probes the two-particle channel.
Let $G_\lambda(x)$ be the Fourier transform of its connected time-ordered
two-point function, where $x=E^2-k^2$ is invariant momentum squared, timelike for
$x>0$. Connectedness subtracts the product of the two $O$ expectation
values. Use the
construction's fixed time-ordered extension and normalize its
continuation from Euclidean momentum by
\begin{equation}
 G_0(x)=2B_0(x),\qquad
 B_0(x)=\frac1{4\pi}\int_0^1\frac{dt}{1-xt(1-t)},\quad x<4,
 \qquad U_\lambda(x):=G_\lambda(x)-G_0(x).
 \label{eq:timelike-response}
\end{equation}
The factor two counts the two free cross-pairings between the $O$ insertions.
At and above the free threshold $x=4$, use the boundary value $x+i0$.
The subtracted baseline always has mass one. The interacting
two-particle threshold is instead $4m_\lambda^2$.

The first correction can be checked directly. Write
$C_1(y)=\int d^2p\,e^{ip\cdot y}/[(2\pi)^2(p^2+1)]$ for the free
Euclidean covariance. Expanding the interaction once gives
\begin{equation}
 \langle O(y)O(0)\rangle_{\lambda,c}
 =2C_1(y)^2
   +24\lambda\int d^2z\,C_1(y-z)^2C_1(z)^2+O(\lambda^2).
 \label{eq:timelike-wick}
\end{equation}
The quartic insertion has positive sign because its coefficient in
the Euclidean action is $-\lambda$. Its four legs must attach to the
four elementary fields in the two $O$ insertions: choosing two for each insertion and pairing them gives
$\binom42(2!)^2=24$ contractions. Wick ordering forbids contractions
within the sextic insertion, and the four fields in the two $O$ insertions cannot absorb its
six legs, so it contributes nothing at this order. Fourier
transformation of the convolution in \eqref{eq:timelike-wick} gives
$U_\lambda(x)=24\lambda B_0(x)^2+O(\lambda^2)$ at each fixed $x<4$
for sufficiently small $\lambda$.

At the below-threshold reference point $x=0$, supply the measurement
\begin{equation}
 s:=U_\lambda(0)
   =\frac{3}{2\pi^2}\lambda+O(\lambda^2),\qquad
 \lambda(s)=\frac{2\pi^2}{3}s+O(s^2).
 \label{eq:timelike-label}
\end{equation}
Constructive smoothness applies to Wick-monomial insertions as well
as to elementary fields
\cite[Lemma~6a and Theorem~7]{EckmannEpsteinFrohlich1976}.
Thus $dU_\lambda(0)/d\lambda$ is continuous from the right at zero
and remains positive on a sufficiently small interval. On that
interval the actual reference response has a smooth inverse:
zero selects the free theory and a small positive $s$ selects one
interacting theory. Its prediction is
\begin{equation}
 \bar s(x;s):=U_{\lambda(s)}(x),\qquad \bar s(0;s)=s.
 \label{eq:timelike-prediction}
\end{equation}
Varying $x$ with the supplied label $s$ fixed probes that same theory.
At singular momenta the prediction is a boundary-value distribution.

\subsection{The atom survives free subtraction and finite smearing}

Fix one such positive $s$, hence $\lambda=\lambda(s)>0$. The model
has a stable two-particle bound state below the interacting threshold
\cite{DimockEckmann1976}. The fixed Wick-square operator has a nonzero
vacuum-to-bound-state matrix element
\cite[Theorem~33, pp.~224--226]{Koch1979}. In its positive
mass spectral measure, this bound state is a discrete delta-function
contribution (an atom). This nonzero matrix element gives a strictly
positive weight $R_\lambda$, yielding locally
\begin{equation}
 G_\lambda(x)=H_\lambda(x)
       +\frac{R_\lambda}{P_\lambda-x-i0},\qquad
 R_\lambda>0,\quad 0<P_\lambda<4m_\lambda^2.
 \label{eq:timelike-bound-state}
\end{equation}
Here $P_\lambda$ is the squared bound-state mass. The remaining
response $H_\lambda$ is analytic near this isolated atom: the
two-particle continuum is separated from it by a gap. The atom is a
stable particle contribution, rather than a resonance in the continuum.

The free baseline has only continuous spectral weight,
\begin{equation}
 G_0(x)=\int_4^\infty\frac{\rho_0(\sigma)}{\sigma-x-i0}\,d\sigma,
 \qquad \rho_0(\sigma)=\frac1{\pi\sqrt{\sigma(\sigma-4)}}
                  \quad(\sigma>4).
 \label{eq:timelike-free-spectrum}
\end{equation}
This follows by taking the imaginary part of the parameter integral
in \eqref{eq:timelike-response}. Its threshold density is locally
integrable and has no delta function. Hence subtracting $G_0$ cannot
cancel $R_\lambda\delta(\sigma-P_\lambda)$ in the interacting
spectrum. This remains true if $P_\lambda=4$: the free boundary
value grows only as $|4-x|^{-1/2}$ there, whereas the atom gives a
simple pole. No ordering between the fixed free threshold $4$ and
the interacting threshold $4m_\lambda^2$ is assumed. The continuum
part of $U_\lambda$ need not be nonnegative; its atom retains the
positive weight $R_\lambda$.

The time ordering fixes the pole prescription,
\begin{equation}
 \frac{R_\lambda}{P_\lambda-x-i0}
 =R_\lambda\PV\frac1{P_\lambda-x}
      +i\pi R_\lambda\delta(x-P_\lambda).
 \label{eq:timelike-boundary}
\end{equation}
Choose a fixed positive invariant-mass-squared window $(a,b)$ containing
$P_\lambda$, below $4m_\lambda^2$ and excluding other interacting
singularities.
For $f\in C_c^\infty(a,b)$ its measured response is
\begin{equation}
 \langle U_\lambda,f\rangle
 =\int_a^b[H_\lambda(x)-G_0(x)]f(x)\,dx
   +R_\lambda\PV\int_a^b\frac{f(x)}{P_\lambda-x}\,dx
   +i\pi R_\lambda f(P_\lambda).
 \label{eq:timelike-packet}
\end{equation}
The first integral is finite even if the free threshold lies in the
window. For the second, the identity
\begin{equation}
 \PV\int_a^b\frac{f(x)}{P_\lambda-x}\,dx
 =\int_a^b\frac{f(x)-f(P_\lambda)}{P_\lambda-x}\,dx
   +f(P_\lambda)\log\frac{P_\lambda-a}{b-P_\lambda}
 \label{eq:timelike-pv-subtraction}
\end{equation}
removes the singularity: the integrand is bounded by
$\|f'\|_\infty$. The delta term simply samples the packet at the
bound-state mass squared. Thus each fixed smooth packet, including one
crossing the pole, gives a finite answer.

These invariant packets have a direct spacetime interpretation. Choose
$q\in C_c^\infty(\mathbb R)$ with $\int q(k)\,dk=1$ and set
$\Phi(E,k)=2E f(E^2-k^2)q(k)$ on $E>0$, extending it by zero to
$E\leq0$. Since $a>0$, this momentum probe vanishes near $E=0$ and
has compact support. The Jacobian $dx=2E\,dE$ gives
$\langle U_\lambda(E^2-k^2),\Phi\rangle=\langle U_\lambda,f\rangle$.
Its inverse Fourier transform is a Schwartz spacetime probe.

For an explicit narrowing family, take $h\in C_c^\infty(-1,1)$,
$h\geq0$, $\int h=1$ and $h(0)>0$. For sufficiently small
$\varepsilon>0$, the normalized packet
\begin{equation}
 f_\varepsilon(x)=\varepsilon^{-1}
             h\bigl((x-P_\lambda)/\varepsilon\bigr),\qquad
 \int f_\varepsilon(x)\,dx=1
 \label{eq:timelike-narrow-packet}
\end{equation}
is supported in $(a,b)$. Its imaginary pole contribution is
$\pi R_\lambda h(0)/\varepsilon$. The analytic background is bounded
there; the free threshold contributes at most $O(\varepsilon^{-1/2})$
if it coincides with $P_\lambda$, and stays bounded otherwise. It
therefore cannot cancel the pole's $1/\varepsilon$ term.
Each width gives a finite answer,
but the packet height and derivatives become unbounded as the width
tends to zero. Thus this normalized family has no finite point-response
limit at $P_\lambda$.

With this deliberately unusual timelike composite-response convention,
a singular point coupling coexists with a rigorously constructed
interacting theory and finite fixed-packet answers. The convention
differs from the ordinary Euclidean amputated elementary four-point
definition used for the conjecture, so this is not a Euclidean
Landau-pole counterexample. At Euclidean squared momentum $Q^2\geq0$,
the atom instead contributes $R_\lambda/(P_\lambda+Q^2)$, and the
response is regular.

\setcounter{section}{5}
\section{A concrete one-loop asymptotically free parent}
\label{app:parent}

Can gauge and Yukawa interactions make a positive scalar quartic decrease
in the ultraviolet while every additional particle is heavy? This
appendix checks both requirements in one candidate for the $N=1$
parent of Section~\ref{sec:parent}. The retained real scalar is a gauge
singlet, so its correlations would be physical observables of a
constructed parent. The calculation supplies a tuned one-loop trajectory
and a stable massive tree vacuum; the continuum construction and the two
estimates in Theorem~\ref{thm:parent} remain further requirements.

\subsection{Fields, interactions and symmetries}

The gauge group is $SU(2)$, with generators $T^a=\sigma^a/2$ in its
fundamental representation. Write $e$ for the gauge coupling and
$D_\mu=\partial_\mu-ieT^aA^a_\mu$. Table~\ref{tab:app-parent-fields}
lists the matter content and the role of each field. In particular,
there are ten fundamental Dirac fermions in total, as well as one
Dirac singlet. Each Dirac fundamental has two gauge components.

\begin{table}[htbp]
\centering
\small
\caption{Field content of the $N=1$ parent candidate. The spectators
$R_j$ have no Yukawa couplings.}
\label{tab:app-parent-fields}
\begin{tabular}{@{}>{\raggedright\arraybackslash}p{.18\linewidth}>{\raggedright\arraybackslash}p{.24\linewidth}>{\raggedright\arraybackslash}p{.52\linewidth}@{}}
\toprule
Field & $SU(2)$ representation & Role\\
\midrule
$A^a_\mu$ & Adjoint gauge field & Three vectors, made massive by $H$.\\[3pt]
$\phi$ & Real singlet & Retained light scalar, with an exact sign symmetry.\\[3pt]
$H$ & Complex fundamental & Higgs field; gives vector masses and a heavy radial scalar.\\[3pt]
$Q_+,Q_-$ & Two Dirac fundamentals & Couple to $\phi$ through opposite Yukawa signs.\\[3pt]
$\Psi$ & One Dirac fundamental & Couples to $H$ and $\psi_s$.\\[3pt]
$R_j$, $j=1,\ldots,7$ & Seven Dirac fundamentals & Adjust the gauge beta function.\\[3pt]
$\psi_s$ & One Dirac singlet & Mixes with one component of $\Psi$ in the Higgs vacuum.\\
\bottomrule
\end{tabular}
\end{table}

Use canonical kinetic terms, real vectorlike masses
$-\sum_f m_f\bar f f$, and the following potential and Minkowski Yukawas:
\begin{equation}
\begin{aligned}
 V={}&\frac{m_0^2}{2}\phi^2-\nu^2H^\dagger H
       +\frac{\lambda_\phi}{24}\phi^4
       +\lambda_H(H^\dagger H)^2+\frac\delta2\phi^2H^\dagger H,\\
 \mathcal L_Y={}&-iy\phi(\bar Q_+\gamma_5Q_+-\bar Q_-\gamma_5Q_-)
       -i\kappa(\bar\Psi\gamma_5H\psi_s
                        +\bar\psi_s\gamma_5H^\dagger\Psi).
\end{aligned}
\label{eq:app-parent-lagrangian}
\end{equation}
$m_0^2$ is the singlet mass parameter, $\lambda_H$ the Higgs quartic,
and $\delta$ the mixed scalar coupling. Here $y,\kappa$ are real,
$\nu^2>0$ selects a Higgs vacuum, and
$\lambda_\phi/24=\lambda_\phi/4!$ fixes the singlet quartic convention.
Use Minkowski signature $(+,-,-,-)$ and
$\gamma_5=i\gamma^0\gamma^1\gamma^2\gamma^3$, with
$\gamma_5^\dagger=\gamma_5$ and $\gamma_5^2=1$.
For Dirac fields,
$(\bar f\gamma_5h)^\dagger=-\bar h\gamma_5f$.
The two displayed pseudoscalar combinations are therefore
anti-Hermitian before multiplication by $-i$, and the Yukawa
Lagrangian is Hermitian. Both left-to-right and right-to-left chiral
couplings are present. Their relative sign will matter for the fermion
masses below.

The scalar sign symmetry is
\begin{equation}
 \phi\longmapsto-\phi,\qquad Q_+\longleftrightarrow Q_-.
 \label{eq:app-parent-sign}
\end{equation}
Equal $Q_\pm$ masses preserve it, giving the required $O(1)$ symmetry
of the retained field. Assign separate conserved vector charges to
$Q_+$ and $Q_-$, exchanged by \eqref{eq:app-parent-sign}; give
$\Psi,\psi_s$ one common charge and each $R_j$ its own charge.
These assignments forbid fermion mixing between the sectors.
The further phase symmetry
$H\mapsto e^{i\theta}H$, $\psi_s\mapsto e^{-i\theta}\psi_s$
forbids replacing $H$ by the conjugate doublet
$\widetilde H=i\sigma^2H^*$ in the Higgs Yukawa.

Finally impose spatial parity combined with
$Q_+\leftrightarrow Q_-$ and $\psi_s\mapsto-\psi_s$,
taking $\phi,H$ parity even and the gauge theta angle zero.
Spatial parity reverses a pseudoscalar bilinear's sign; the additional
flavour transformations reverse it again in \eqref{eq:app-parent-lagrangian}.
This symmetry forbids the corresponding scalar Yukawas. Together
these symmetries retain precisely the Yukawa sectors displayed above.
They allow the portal $\phi^2H^\dagger H$: its vanishing on the
one-loop trajectory is a flow property, rather than a symmetry
protection.
$SU(2)$ has no local perturbative gauge anomaly for this fundamental
and singlet matter content. The vectorlike assignments permit the
gauge-invariant Dirac mass terms used above. The ten Dirac fundamentals
count as twenty left-handed Weyl doublets, an even number, so the
global $SU(2)$ gauge anomaly is absent \cite{Witten1982SU2Anomaly}.

\subsection{The one-loop flow and why its ray must be tuned}

Set
$\alpha=e^2$, $\eta=y^2$, $\zeta=\kappa^2$, and
$\dot x=dx/d\log Q$. The following one-loop beta functions use a
mass-independent subtraction scheme above the heavy thresholds.
These are Lagrangian couplings running as $Q$ varies within one
selected theory. They follow by specializing the general gauge, Yukawa and quartic
formulas of \citet[eqs.~(30), (33), (38)--(43)]{LuoWangXiao2003}
to \eqref{eq:app-parent-lagrangian}:
\begin{equation}
\begin{aligned}
16\pi^2\dot\alpha&=-\alpha^2,\\
16\pi^2\dot\eta&=22\eta^2-9\alpha\eta,\\
16\pi^2\dot\zeta&=7\zeta^2-\tfrac92\alpha\zeta,\\
16\pi^2\dot\lambda_\phi
 &=3\lambda_\phi^2+32\eta\lambda_\phi-192\eta^2+12\delta^2,\\
16\pi^2\dot\lambda_H
 &=24\lambda_H^2+(8\zeta-9\alpha)\lambda_H
                         +\tfrac98\alpha^2-4\zeta^2+\tfrac12\delta^2,\\
16\pi^2\dot\delta
 &=\delta(\lambda_\phi+12\lambda_H+4\delta
                                +16\eta+4\zeta-\tfrac92\alpha).
\end{aligned}
\label{eq:app-parent-betas}
\end{equation}
The gauge terms restrain Yukawa growth. The negative $-192\eta^2$
and $-4\zeta^2$ terms then allow positive singlet and Higgs quartics
to decrease: the $Q_\pm$ sector supplies the first term and the
$\Psi,\psi_s$ sector the second. The seven spectators change the
gauge coefficient to the small positive value $b_0=1/2$ calculated
below.
Both chiral sectors of the Higgs Yukawa contribute. For the Yukawa
interaction of $\phi$, traces over fermion components count both gauge
components of each $Q_\pm$, giving four Dirac components. The spectators contribute
only to the gauge equation at this order. In conventional notation
$16\pi^2\dot e=-b_0e^3$, their gauge count is
\begin{equation}
 b_0=\frac{11}{3}\,2-\frac43\,10\left(\frac12\right)
                    -\frac13\left(\frac12\right)=\frac12.
 \label{eq:app-parent-gauge-count}
\end{equation}
The last term counts one complex fundamental scalar. Thus
$16\pi^2\dot\alpha=-2b_0\alpha^2=-\alpha^2$, as printed above.

To look for a ray, divide all other couplings by the decreasing
gauge square:
\begin{equation}
 (r_\eta,r_\zeta,r_\phi,r_H,r_\delta)
 =\frac{(\eta,\zeta,\lambda_\phi,\lambda_H,\delta)}\alpha,
 \qquad d\tau=\frac{\alpha}{16\pi^2}\,d\log Q.
 \label{eq:app-parent-ratio-definitions}
\end{equation}
For $x=\alpha r_x$, differentiating the ratio adds $+r_x$ to
the beta polynomial divided by $\alpha^2$, because the gauge
beta function is negative. This gives the autonomous equations
\begin{equation}
\begin{aligned}
\frac{dr_\eta}{d\tau}&=r_\eta(22r_\eta-8),&
\frac{dr_\zeta}{d\tau}&=r_\zeta(7r_\zeta-\tfrac72),\\
\frac{dr_\phi}{d\tau}&=3r_\phi^2+(32r_\eta+1)r_\phi
                                  -192r_\eta^2+12r_\delta^2,\\
\frac{dr_H}{d\tau}&=24r_H^2+(8r_\zeta-8)r_H
                                   +\tfrac98-4r_\zeta^2+\tfrac12r_\delta^2,\\
\frac{dr_\delta}{d\tau}&=r_\delta(r_\phi+12r_H+4r_\delta
                                      +16r_\eta+4r_\zeta-\tfrac72).
\end{aligned}
\label{eq:app-parent-ratio-flow}
\end{equation}
The positive nonzero Yukawa roots are $r_\eta=4/11$ and
$r_\zeta=1/2$. With $r_\delta=0$, the quartic equations reduce to
\[
 3r_\phi^2+\frac{139}{11}r_\phi-\frac{3072}{121}=0,
 \qquad 24r_H^2-4r_H+\frac18=0.
\]
Choose the positive singlet root and the Higgs root $1/8$:
\begin{equation}
 r_*=\left(\frac4{11},\frac12,
          \frac{-139+\sqrt{56185}}{66},\frac18,0\right),
 \qquad r_{\phi,*}\simeq1.48536.
 \label{eq:app-parent-ray}
\end{equation}
The Higgs equation also has the root $1/24$; the chosen root gives
the tree mass pattern used below. On \eqref{eq:app-parent-ray}
every interaction is a constant times $\alpha$. At a reference
energy $Q_*$ above threshold, put $\alpha_*=\alpha(Q_*)$ and choose
$\tau(Q_*)=0$. The one-loop gauge solution is
\begin{equation}
 \alpha(Q)=\frac{\alpha_*}
 {1+\alpha_*\log(Q/Q_*)/(16\pi^2)},\qquad
 \tau(Q)-\tau(Q_*)=\log\frac{\alpha_*}{\alpha(Q)}.
 \label{eq:app-parent-gauge-solution}
\end{equation}
It tends to zero as $Q\to\infty$, so every interaction tends to
zero: the ray has complete asymptotic freedom (CAF) within this
one-loop system.

This property does not describe an open ultraviolet-attractive region
around the ray. Linearize \eqref{eq:app-parent-ratio-flow} at $r_*$.
In the displayed ratio order its Jacobian is triangular, with eigenvalues
\begin{equation}
 8,\qquad\frac72,\qquad\frac{\sqrt{56185}}{11},\qquad2,
 \qquad\frac{\sqrt{56185}+245}{66}.
 \label{eq:app-parent-eigenvalues}
\end{equation}
Each is positive. A small ratio deviation in an eigenvector direction
therefore grows as $e^{\omega\tau}=(\alpha_*/\alpha(Q))^\omega$
in the linear regime, with $\omega$ one of the positive eigenvalues
above. Approaching this ray requires
tuning the initial ratios; choosing arbitrary nearby ratios does not
establish CAF. Such tuned trajectories belong to the total
asymptotic-freedom setting of \citet{GiudiceEtAl2015}; full ultraviolet
attraction, as sought in \citet{HoldomRenZhang2015}, is a stronger
condition. Higher-loop terms must be included to find a corrected CAF
trajectory satisfying those extended RG equations. In particular, the portal may be generated:
$\delta=0$ has only been checked here at one loop.

\subsection{A stable vacuum with a heavy tree spectrum}

Let $M_H$ denote the desired heavy mass scale and evaluate the
running Lagrangian parameters at $Q_H=10M_H$, writing $\alpha_H=\alpha(Q_H)$.
The factor ten places this evaluation scale above all the tree masses
listed below.
Take the tree vacuum and parameter choices
\begin{equation}
 \phi=0,\qquad H=\begin{pmatrix}0\\v/\sqrt2\end{pmatrix},
 \qquad v=\frac{2M_H}{\sqrt{\alpha_H}},\qquad
 \nu^2=\lambda_Hv^2.
 \label{eq:app-parent-vacuum}
\end{equation}
Expanding $|D_\mu H|^2$ gives all three vector masses squared
$\alpha_Hv^2/4$. Writing the remaining radial fluctuation as
$H=(0,(v+h)/\sqrt2)^T$ gives $m_h^2=2\lambda_Hv^2$.
On $r_H=1/8$, both equal $M_H^2$. A nonzero fundamental vacuum
breaks all continuous $SU(2)$ gauge generators, so all three gauge
directions become longitudinal vector modes. The four real components
of $H$ leave only this radial scalar. The global Higgs phase creates
no additional physical Goldstone: its action on the vacuum can be
compensated by the gauge rotation
$\operatorname{diag}(e^{i\theta},e^{-i\theta})$.

Give $Q_\pm$ and the seven spectators diagonal mass $M_H$, and
give $\Psi,\psi_s$ diagonal masses $d=2M_H$. At the vacuum, the
$\Psi$ component along $H$, denoted $\Psi_\parallel$, mixes with
$\psi_s$ by $t=\kappa v/\sqrt2=M_H$, since $\zeta=\alpha_H/2$.
The orthogonal component $\Psi_\perp$ keeps its mass $2M_H$.
To see the mixed masses, put
$F=(\Psi_\parallel,\psi_s)^T$ and use
$P_{L,R}=(1\mp\gamma_5)/2$. The four-component mass operator and
the corresponding chiral mass matrix are, with $\sigma^1$ now
acting on the two entries of $F$,
\begin{equation}
 \begin{aligned}
 \mathcal L_{\rm mix}
   &=-\bar F(d\mathbf1+it\gamma_5\sigma^1)F
     =-\bar F_L\mathcal M F_R-\bar F_R\mathcal M^\dagger F_L,\\
 \mathcal M&=\begin{pmatrix}d&it\\it&d\end{pmatrix},\qquad
 \mathcal M^\dagger\mathcal M=(d^2+t^2)\mathbf1=5M_H^2\mathbf1.
 \end{aligned}
 \label{eq:app-parent-mass-matrix}
\end{equation}
Dirac masses are the singular values of $\mathcal M$, giving two
mixed modes of mass $\sqrt5M_H$. The pseudoscalar phase is why
the mixing contributes $+t^2$ to their squared masses.

\begin{table}[htbp]
\centering
\small
\caption{Tree spectrum on the chosen ray and mass assignment.
All additional modes have mass comparable to $M_H$.}
\label{tab:app-parent-masses}
\begin{tabular}{@{}p{.68\linewidth}p{.26\linewidth}@{}}
\toprule
Modes & Mass\\
\midrule
Retained singlet $\phi$ & $m>0$\\[3pt]
Three gauge vectors and radial Higgs & $M_H$\\[3pt]
$Q_\pm$ and the seven spectators & $M_H$\\[3pt]
Unmixed $\Psi_\perp$ & $2M_H$\\[3pt]
Two mixed Dirac modes from $\Psi_\parallel,\psi_s$ & $\sqrt5M_H$\\
\bottomrule
\end{tabular}
\end{table}

Classical stability and the light mass can be checked directly,
including a small portal. Put $x=H^\dagger H$ and $x_0=v^2/2$.
Using $\nu^2=2\lambda_Hx_0$, complete the square:
\begin{equation}
 \begin{aligned}
 V={}&\lambda_H\left(x-x_0+\frac{\delta\phi^2}{4\lambda_H}\right)^2
       -\lambda_Hx_0^2
       +\frac12\left(m_0^2+\frac{\delta v^2}{2}\right)\phi^2\\
    &+\frac1{24}\left(\lambda_\phi-
                         \frac{3\delta^2}{2\lambda_H}\right)\phi^4.
 \end{aligned}
 \label{eq:app-parent-square}
\end{equation}
Consequently the sufficient conditions
\begin{equation}
 \lambda_H>0,\qquad
 \lambda_\phi>\frac{3\delta^2}{2\lambda_H},\qquad
 m_0^2+\frac{\delta v^2}{2}=m^2>0
 \label{eq:app-parent-stability}
\end{equation}
make the displayed vacuum a global tree minimum and leave $\phi$
light. They hold on the ray with $\delta=0$ and $m_0^2=m^2$.
Quantum pole-mass, residue and quartic matching must replace this
classical tuning; in particular, heavy corrections of order
$gM_H^2$ to the light squared mass are absorbed into the exact mass
condition, rather than regarded as suppressed errors.

\subsection{From Lagrangian coupling ratios to physical input labels}

The ratios $r(Q)$ above compare the running Lagrangian couplings
with the running squared gauge coupling in one selected theory. Section~\ref{sec:parent} instead uses
physical input labels at the fixed subtraction scale $Q_0=\mu$:
$u_i(g)=gz_i(g)$. Thus $z_i(g)=u_i(g)/g$ compares supplied couplings
while $g$ varies through a family of theories. It is not the ratio
$r_i(Q)=c_i(Q)/\alpha(Q)$ evolving with $Q$.

The quartic normalization makes the difference concrete. For $N=1$,
\eqref{eq:bare} uses $h_B\phi^4/8$, whereas the parent uses
$\lambda_\phi\phi^4/24$. At tree level
\begin{equation}
 h_B=\frac{\lambda_\phi}{3},\qquad
 g=\frac{h_B}{32\pi^2}=\frac{\lambda_\phi}{96\pi^2}
   =c_\phi r_{\phi,*}\alpha_H,\qquad c_\phi=\frac1{96\pi^2}.
 \label{eq:app-parent-matching}
\end{equation}
For a gauge label with tree normalization $u_{\rm gauge}=e^2$,
the leading supplied ratio is therefore
$z_{\rm gauge}=1/(c_\phi r_{\phi,*})$. For example, tree conventions
\begin{equation}
 u^{\rm tree}=(e^2,y^2,\kappa^2,\lambda_H,\delta),\qquad
 z^{\rm tree}=\frac{(1,4/11,1/2,1/8,0)}{c_\phi r_{\phi,*}}
 \label{eq:app-parent-tree-labels}
\end{equation}
name the five additional coupling variables and use squares for the
gauge and Yukawa inputs. Exact physical labels still require their
specified subtraction conventions and matching corrections.
Running from $Q_H$ to $Q_0$,
threshold matching, and conversion to physical subtraction conditions
modify these tree identities. Perturbatively these corrections can
include $\alpha_H^2\ell_H$ terms, with
$\ell_H=1+\log(M_H/m_*)$ and fixed reference mass $m_*$.
Their control on the growing-mass family is part of the required
matching work.

A corrected CAF trajectory satisfying the parent's RG equations
would determine the physical $z_i(g)$ after these conversions. The ratios cannot be chosen freely
to engineer scalar coefficients. Below the heavy scale the scalar
interaction must remain weak; Section~\ref{sec:parent} checks this
scale separation for
$M_H(g)=m_*\exp[(\log(1/g))^2]$ using the scalar one-loop comparison.
Above threshold the coupled flow checked here supplies the candidate
ultraviolet behavior. Changing the probing energy holds all supplied
labels fixed; changing $g$ selects different parents with different
heavy masses.

The established ingredients are the symmetry and anomaly checks,
the six one-loop beta equations with their tuned positive ray, and
the stable massive tree spectrum. To apply Theorem~\ref{thm:parent},
one still needs a corrected CAF trajectory, a normalized continuum
parent with a stable quantum phase and physical gap, the scalar
coefficient estimate \eqref{eq:coefficient-decoupling}, and the exact
remainder estimate \eqref{eq:parent-remainder} along the same growing
mass path. Neither estimate follows from the ray or tree spectrum.
The candidate identifies these tasks for a single retained scalar;
it supplies no all-order CAF or rigorous parent construction.

\renewcommand{\bibsection}{\section*{\refname}\phantomsection\addcontentsline{toc}{section}{References}}
\setlength{\bibsep}{3pt}

\end{document}